\def\R{\mathbb R}
\def\N{\mathbb N}
\def\C{\mathbb C}
\def\Q{\mathbb Q}
\newtheorem{lemma}{Lemma}[section]
\newtheorem{proposition}[lemma]{Proposition}
\newtheorem{theorem}[lemma]{Theorem}
\newtheorem{corollary}[lemma]{Corollary}
\newtheorem{observation}[lemma]{Observation}
\theoremstyle{definition}
\renewcommand{\d}{{\mathrm d}}
\renewcommand{\i}{{\mathrm i}}
\def\e{\mathrm e}
\def\H{\mathcal{H}}
\def\V{\mathcal{V}}
\DeclareMathOperator{\tg}{tg}
\DeclareMathOperator{\cotg}{cotg}
\DeclareMathOperator{\sgn}{sgn}
\begin{document}

\title{Spectrum of a dilated honeycomb network}

\author[Exner]{Pavel Exner}

\address{%
Nuclear Physics Institute \\
Academy of Sciences of the Czech Republic \\
Hlavn\'{\i} 130,  250 68 \v Re\v{z}, Czech Republic, and \\
Doppler Institute \\
Czech Technical University \\
B\v{r}ehov\'{a} 7, 11519 Prague, Czech Republic}

\email{exner@ujf.cas.cz}

\author[Turek]{Ond\v{r}ej Turek}
\address{Nuclear Physics Institute \\
Academy of Sciences of the Czech Republic \\
Hlavn\'{\i} 130,  250 68 \v Re\v{z}, Czech Republic, and \\
Bogolyubov Laboratory of Theoretical Physics \\
Joint Institute for Nuclear Research \\
141980 Dubna, Russia}

\email{turek@theor.jinr.ru}
\subjclass{Primary 81Q35; Secondary 34B45, 34K13, 35B10}

\keywords{Quantum graphs, hexagon lattice, Laplace operator, vertex $\delta$-coupling, spectrum}

\begin{abstract}

We analyze spectrum of Laplacian supported by a periodic honeycomb lattice with generally unequal edge lengths and a $\delta$ type coupling in the vertices. Such a quantum graph has nonempty point spectrum with compactly supported eigenfunctions provided all the edge lengths are commensurate. We derive conditions determining the continuous spectral component and show that existence of gaps may depend on number-theoretic properties of edge lengths ratios. The case when two of the three lengths coincide is discussed in detail.

\end{abstract}

\maketitle

\section{Introduction}

Quantum graphs, more exactly differential operators on metric graphs describing quantum motion confined to networks, attracted a lot of attention recently as a fruitful combination of spectral theory, geometry, combinatorics, and other disciplines. The number of results in this area is large and permanently increasing; we refer to the monograph \cite{BK} for an up-to-date survey.

A class of particular interest are quantum graphs having a periodic structure. On one hand they are interesting mathematically, in particular, because the corresponding operators may exhibit properties different from standard periodic Schr\"odinger operators, for instance they may have compactly supported eigenfunctions. On the other hand, they provide a physical model of various systems having crystalline structure which become popular especially recently in connection with the discovery of graphene and related material objects such as carbon nanotubes \cite{KP07}.

Physical models of various lattice structures usually involve symmetries as arrangements which the nature favours. This may be true in the ideal situation but it can change under influence of external forces, for instance, mechanical strains. At the same time, we know from the simple model of a rectangular quantum-graph lattice \cite{Ex96, EG96} that the graph geometry may give rise to interesting number-theoretic effects in the spectrum. This motivates us to inspect how edge length variations can affect the spectrum of the lattice appearing most frequently in the applications, the hexagonal one.

Let us thus consider an infinite honeycomb graph $\Gamma$ dilated independently in all the three directions, as depicted in Figure~\ref{Hexagon} below. That is, each hexagon consists of two andtipodal edges of length $a$, two antipodal edges of length $b$ and two antipodal edges of length $c$. The operator to investigate is the corresponding quantum-graph Hamiltonian, that is, a Laplacian on the Hilbert space $\H=L^2(\Gamma)$ consisting of sequences $\psi=\{\psi_j\}$ the elements of which refer to edges of $\Gamma$. The operator acts as $H\psi=\{-\psi_j''\}$ on functions from $H^1(\Gamma) \cap H^2(\Gamma\setminus\V)$, where $\V$ is the set of graph vertices. In order to make it self-adjoint we have to specify its domain, for instance, by indicating boundary conditions. We choose the so-called $\delta$-coupling \cite{Ex96} requiring
\begin{equation} \label{bc-delta}
\psi_1(0+)=\psi_2(0+)=\psi_3(0+)=:\psi(0)\,, \qquad \sum_{i=1}^3 \psi'_i(0+)= \alpha \psi(0)\,,
\end{equation}
where $i=1,2,3$ number three edges meeting in a vertex, which are para\-metri\-zed by their arc length with zero at the junction. We suppose that the coupling is the same at each vertex, hence the operators exhibit translational symmetry corresponding to the geometry of the hexagonal lattice. It would be thus natural to label the operator by the parameter appearing in (\ref{bc-delta}) writing it, for instance, as $H_\alpha$ for a fixed $\alpha\in\R$, however, since there will be no danger of misunderstanding, we shall drop the index.

An alternative way is to characterize the operator $H$ by means of the associated quadratic form which is given by
\begin{equation} \label{qform}
q[\psi] = \int_\Gamma |\psi'(s)|^2 \d s + \alpha \sum_i |\psi_i|^2
\end{equation}
with the domain consisting of all functions from $H^1(\Gamma)$, where the first term is a shorthand for the sum over all the edges and in the second term we sum over all the vertices and $\psi_i$ is the function value at the $i$-th vertex. It is obvious from (\ref{qform}) that $H\ge 0$ holds for $\alpha\ge 0$, and it is not difficult to check that for $\alpha<0$ we have $\inf\sigma(H)<0$.

Our goal in this paper is to analyze the spectrum of $H$. Since the system is periodic, it has a band structure but in general it can have a nonempty point component \cite{BK}. We are going to show that this happens \emph{iff} all the lattice edges are commensurate. Next we derive the condition determining the spectrum, in particular, its open gaps. After the general discussion, we focus in Sec.~\ref{s: b=c} on the particular case when two of the three edge lengths are identical and analyze the gap structure in detail. We conclude the paper by mentioning a couple of  questions about the model which remain open.

\section{Point spectrum}

In contrast to the usual Schr\"odinger operator theory, quantum graph Hamiltonians may violate the unique continuation principle -- see, e.g., \cite{Ku05}. It happens also in our present model; a sufficient condition for that is a commensurability of the lattice edges lengths.

\begin{proposition}\label{Bod postacujici}
If $\frac{b}{a}\in\Q$ and $\frac{c}{a}\in\Q$, then $\sigma_\mathrm{p}(H)\ne\emptyset$.
\end{proposition}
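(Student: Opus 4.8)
The plan is to exploit the commensurability hypothesis to produce a compactly supported eigenfunction living on a single hexagon, which immediately yields $\sigma_\mathrm{p}(H)\ne\emptyset$. Since $\frac{b}{a},\frac{c}{a}\in\Q$, the three lengths share a common measure: there is an $\ell>0$ and positive integers $n_a,n_b,n_c$ with $a=n_a\ell$, $b=n_b\ell$, $c=n_c\ell$. First I would set $k=\pi/\ell$ and take the candidate energy $E=k^2$; the point of this choice is that $\sin(ka)=\sin(kb)=\sin(kc)=0$, so a pure sine of wavenumber $k$ vanishes at both endpoints of every edge.

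Next, fix one hexagon with vertices $V_1,\dots,V_6$ in cyclic order and edges $e_i$ joining $V_i$ to $V_{i+1}$ (indices mod $6$), the lengths alternating $a,b,c,a,b,c$ so that antipodal edges are equal. I would look for $\psi$ supported on this hexagon, putting $\psi\equiv 0$ on the rest of $\Gamma$ and, on $e_i$ parametrized from $V_i$, setting $\psi|_{e_i}(s)=A_i\sin(ks)$. By the choice of $E$ each such sine vanishes at both ends, so $\psi$ is continuous on $\Gamma$ (value $0$ at every $V_i$, matching the zero exterior) and solves $-\psi''=E\psi$ edgewise; thus the continuity part of (\ref{bc-delta}) holds automatically and $\psi\in H^1(\Gamma)$. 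Because $\psi(V_i)=0$, the $\delta$-condition at $V_i$ reduces to the vanishing of the sum of outgoing derivatives; the outward edge carries the zero function, so only $e_{i-1}$ and $e_i$ contribute, yielding a relation of the form $A_i=(-1)^{n_{i-1}}A_{i-1}$ between consecutive amplitudes.

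Finally I would check that this recursion closes consistently around the loop. Composing the six relations returns $A_1=(-1)^{n_1+\cdots+n_6}A_1$, and since each of $n_a,n_b,n_c$ occurs exactly twice the exponent $n_1+\cdots+n_6=2(n_a+n_b+n_c)$ is even. Hence the product of signs is $+1$, the system is compatible, and choosing $A_1=1$ gives amplitudes $A_i=\pm1$ and a genuine nonzero eigenfunction with eigenvalue $E$. The one place that really requires the honeycomb geometry, and the step I would treat as the crux, is precisely this closing of the loop: it is the equality of antipodal edge lengths (each length appearing an even number of times) that forces the sign product to be trivial, so that the construction never obstructs. Everything else, in particular the verification that $\psi$ lies in the form domain of (\ref{qform}) and is not identically zero, is routine.
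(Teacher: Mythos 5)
Your proof is correct and follows essentially the same route as the paper: a compactly supported eigenfunction built from sines on a single hexagon, vanishing at all six vertices so that the $\delta$-coupling reduces to a derivative-sum condition. The only difference is that you take $ka,kb,kc$ to be integer multiples of $\pi$ rather than $2\pi$ (forcing the explicit sign-tracking around the loop, which you close correctly using the evenness of $2(n_a+n_b+n_c)$), whereas the paper's choice of multiples of $2\pi$ makes that bookkeeping trivial.
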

\begin{proof}
Under the assumption, there is an infinite number of values $k$ such that $ka$, $kb$, and $kc$ are integer multiples of $2\pi$. Then a sinusoidal function on a perimeter of a hexagon cell with zeros at the vertices gives rise to an eigenfunction of $H$ since it solves the equation $-\phi''=k^2\phi$ and satisfies the boundary conditions \eqref{bc-delta}.
\end{proof}

\noindent It is also obvious that such a point spectrum is of infinite multiplicity. On the other hand, the commensurability is also a necessary condition.

\begin{proposition}\label{Bod nutna}
If $\sigma_\mathrm{p}(H)\ne\emptyset$, then $\frac{b}{a}\in\Q$ and $\frac{c}{a}\in\Q$.
\end{proposition}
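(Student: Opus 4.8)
The plan is to pass to the Floquet--Bloch fibration and to show that a nonempty point spectrum forces all three numbers $\sin ka,\sin kb,\sin kc$ to vanish, where $\lambda=k^2$ is the eigenvalue; commensurability then drops out immediately. Write $H=\int^{\oplus} H(\theta)\,\d\theta$, where the fiber $H(\theta)$, $\theta=(\theta_1,\theta_2)\in[-\pi,\pi]^2$, acts on the compact fundamental cell --- two vertices $A,B$ joined by three edges of lengths $a,b,c$ carrying the Bloch phases $z_1=1$, $z_2=\e^{\i\theta_1}$, $z_3=\e^{\i\theta_2}$ --- and therefore has purely discrete spectrum. Since the resulting dispersion relation is analytic in $\theta$, the value $\lambda$ lies in $\sigma_\mathrm{p}(H)$ iff it sits on a flat band, i.e. iff $\lambda\in\sigma(H(\theta))$ for \emph{every} $\theta$; equivalently, to exclude $\lambda$ from $\sigma_\mathrm{p}(H)$ it suffices to produce a single $\theta^\ast$ with $\lambda\notin\sigma(H(\theta^\ast))$. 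I would use this reformulation throughout.

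Assume first $\lambda=k^2>0$ with $\sin ka,\sin kb,\sin kc$ all nonzero. Solving $-\psi''=k^2\psi$ on each edge and imposing the continuity and derivative parts of \eqref{bc-delta} at $A$ and $B$ reduces the fiber problem to the homogeneous system with matrix $\bigl(\begin{smallmatrix} C+\alpha & -S(\theta)\\ -\overline{S(\theta)} & C+\alpha\end{smallmatrix}\bigr)$, where $C=k(\cot ka+\cot kb+\cot kc)$ is independent of $\theta$ and $S(\theta)=k\bigl(\tfrac1{\sin ka}+\tfrac{z_2}{\sin kb}+\tfrac{z_3}{\sin kc}\bigr)$. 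Thus $k^2\in\sigma(H(\theta))$ iff the dispersion relation $(C+\alpha)^2=|S(\theta)|^2$ holds. As $(z_2,z_3)$ run over the unit torus, $|S(\theta)|^2$ sweeps out an interval of positive length --- the coefficients $1/\sin kb$ and $1/\sin kc$ being nonzero --- so it cannot equal the $\theta$-independent value $(C+\alpha)^2$ for all $\theta$, and a suitable $\theta^\ast$ removes $\lambda$ from the point spectrum. The same computation with $\sin,\cos$ replaced by $\sinh,\cosh$ (for $\lambda<0$) or by linear functions (for $\lambda=0$) disposes of the non-positive energies, since there the analogue of $S(\theta)$ is again non-constant and no edge can support a localized bump.

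The delicate part, which I expect to be the main obstacle, is the degenerate situation where some sines vanish, for then the reduction above is invalid and the secular equation must be rebuilt. If exactly one sine vanishes, say $\sin ka=0$ with $ka=n\pi$, the edge of length $a$ forces $\psi_B=(-1)^n\psi_A$ and contributes one free coefficient; eliminating it between the two $\delta$-conditions collapses the problem to the single scalar equation $\alpha+k\cot kb+k\cot kc=(-1)^n k\bigl(\tfrac{\cos\theta_1}{\sin kb}+\tfrac{\cos\theta_2}{\sin kc}\bigr)$, whose right-hand side depends on $\theta$ while the left-hand side does not, so once more a suitable $\theta^\ast$ prevents $\lambda$ from being an eigenvalue. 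If two sines vanish, say $\sin ka=\sin kb=0$, the two resonance constraints (on both the $\psi_A=0$ and the $\psi_A\ne 0$ branches) are compatible only for $z_2=\pm1$, i.e. for isolated $\theta_1$, so $\lambda$ is an eigenvalue of $H(\theta)$ on at most a lower-dimensional set and hence $\lambda\notin\sigma_\mathrm{p}(H)$. Consequently a point eigenvalue can occur only when $\sin ka=\sin kb=\sin kc=0$, i.e. when $ka,kb,kc$ are positive integer multiples of $\pi$; then $\frac ba=\frac{kb}{ka}\in\Q$ and $\frac ca=\frac{kc}{ka}\in\Q$, which is the assertion. (The surviving energies are exactly the flat bands of the single-hexagon eigenfunctions constructed in Proposition~\ref{Bod postacujici}.) Besides this case distinction, the remaining care goes into the bookkeeping of the Bloch phases in the vertex conditions and into the general fact that the point spectrum of a periodic metric graph coincides with its flat bands.
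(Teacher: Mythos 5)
Your argument is correct and is essentially the paper's own: both identify $\sigma_\mathrm{p}(H)$ with flat bands of the Floquet--Bloch fibers and rule them out by showing that the $\theta$-dependent part of the secular equation is non-constant in $(\theta_1,\theta_2)$ unless $\sin ak=\sin bk=\sin ck=0$, which forces $\frac{b}{a},\frac{c}{a}\in\Q$. The only difference is organizational: the paper's elimination (Proposition~\ref{p: nonconst}) is arranged so that a single nonvanishing sine, say $\sin ak\neq0$, already makes the coefficient of $\cos(\theta_1-\theta_2)$ nonzero, so your separately treated partially degenerate cases (one or two vanishing sines) are subsumed without extra work.
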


\noindent We postpone the proof of this claim to the next section.

\section{Continuous spectrum}

\subsection{Determining the spectrum}

Since we are dealing with a periodic graph, a natural tool to employ is  the Floquet-Bloch decomposition \cite[Chap.~4]{BK}. The elementary cell of $\Gamma$ is shown in Fig.~\ref{Hexagon}, together with the symbols we use to denote the wave function components on the edges.

\begin{figure}[h]
\begin{center}
\includegraphics[angle=0,width=0.7\textwidth]{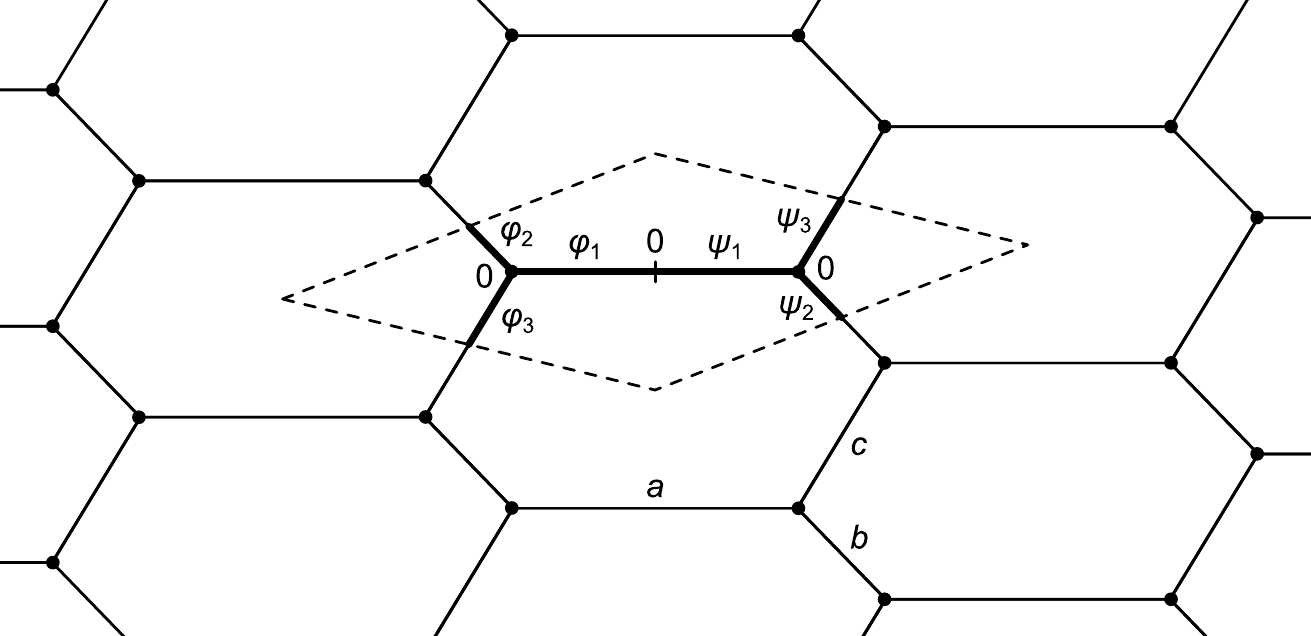}
\caption{A dilated honeycomb network and the elementary cell}\label{Hexagon}
\end{center}
\end{figure}

We are interested in generalized eigenfunctions of the graph Laplacian at an energy $E$. If $E>0$, we put conventionally $E=k^2$ with $k>0$ and assume that $\sin(\ell k)\neq 0$ holds for at least one $\ell\in\{a,b,c\}$; without loss of generality we may suppose that $\sin(ak)\neq0$. Since the Hamiltonian acts as a negative second derivative, the wavefunction on each edge has to be a linear combination of the exponentials $\e^{\i kx}$ and $\e^{-\i kx}$, specifically we can write
\begin{subequations}\label{vlnfce}
\begin{align}
\psi_1(x)&=C_1^+\e^{\i k x}+C_1^-\e^{-\i k x},\quad x\in[0,a/2] \label{psi1} \\
\psi_2(x)&=C_2^+\e^{\i k x}+C_2^-\e^{-\i k x},\quad x\in[0,b/2] \label{psi2} \\
\psi_3(x)&=C_3^+\e^{\i k x}+C_3^-\e^{-\i k x},\quad x\in[0,c/2] \label{psi3} \\
\varphi_1(x)&=D_1^+\e^{\i k x}+D_1^-\e^{-\i k x},\quad x\in[-a/2,0] \label{phi1} \\
\varphi_2(x)&=D_2^+\e^{\i k x}+D_2^-\e^{-\i k x},\quad x\in[-b/2,0] \label{phi2} \\
\varphi_3(x)&=D_3^+\e^{\i k x}+D_3^-\e^{-\i k x},\quad x\in[-c/2,0] \label{phi3}
\end{align}
\end{subequations}
Obviously, $\psi_1(0)=\varphi_1(0)$ and $\psi_1'(0)=\varphi_1'(0)$, hence
\begin{equation}\label{CD1}
C_1^+=D_1^+\,,\qquad C_1^-=D_1^-\,.
\end{equation}
The wave functions have to satisfy the following six boundary conditions corresponding to the $\delta$-couplings in the vertices (\ref{bc-delta}), namely
\begin{subequations}\label{delta}
\begin{gather}
\psi_2(0)=\psi_3(0)=\psi_1(a/2) \label{cont.psi} \\
\psi_2'(0)+\psi_3'(0)-\psi_1'(a/2)=\alpha\psi_1(0) \label{der.psi} \\
\varphi_2(0)=\varphi_3(0)=\varphi_1(-a/2) \label{cont.phi} \\
-\varphi_2'(0)-\varphi_3'(0)+\varphi_1'(-a/2)=\alpha\varphi_1(0) \label{der.phi}
\end{gather}
\end{subequations}
where $\alpha\in\R$ is the coupling parameter. On the other hand, the Floquet-Bloch decomposition requires to impose the following conditions,
\begin{equation}\label{Floquet}
\begin{split}
\psi_2(b/2)=\e^{\i\theta_1}\varphi_2(-b/2)\,,&\qquad\qquad \psi_3(c/2)=\e^{\i\theta_2}\varphi_3(-c/2)\,,\\
\psi_2'(b/2)=\e^{\i\theta_1}\varphi_2'(-b/2)\,,&\qquad\qquad \psi_3'(c/2)=\e^{\i\theta_2}\varphi_3'(-c/2)
\end{split}
\end{equation}
for some $\theta_1,\theta_2\in(-\pi,\pi]$.
Substituting~\eqref{psi2}--\eqref{phi3} into~\eqref{Floquet} enables one to express variables $D_2^{\pm}$ and $D_3^{\pm}$ in terms of $C_2^{\pm}$ and $C_3^{\pm}$: we obtain
\begin{equation}\label{elim}
\begin{split}
D_2^+=C_2^+\cdot\e^{\i(bk-\theta_1)}\,,&\qquad\qquad D_3^+=C_3^+\cdot\e^{\i(ck-\theta_2)}\,, \\
D_2^-=C_2^-\cdot\e^{\i(-bk-\theta_1)}\,,&\qquad\qquad D_3^-=C_3^-\cdot\e^{\i(-ck-\theta_2)}\,.
\end{split}
\end{equation}
The continuity at the vertices -- cf. conditions~\eqref{cont.psi}, \eqref{cont.phi} -- together with \eqref{CD1} allow us to eliminate coefficients $C_1^\pm$ and $D_1^\pm$. In this way we obtain a system of four linear equations containing $C_2^+,C_2^-,C_3^+,C_3^-$ as the unknown quantities and $a,b,c,k,\alpha$ as parameters, namely
\begin{equation}\label{matice}
M\left(\begin{array}{c}C_2^+\\C_2^-\\C_3^+\\C_3^-\end{array}\right)=0\,,
\end{equation}
where the matrix $M$ is given as
$$
M=\left(\begin{array}{cccc}
1 & 1 & -1 & -1\\
\e^{\i(bk-\theta_1)} & \e^{\i(-bk-\theta_1)} & -\e^{\i(ck-\theta_2)} & -\e^{\i(-ck-\theta_2)} \\
m_{31} & m_{32} & \i & -\i \\
m_{41} & m_{42} & -\i\e^{\i(ck-\theta_2)} & \i\e^{\i(-ck-\theta_2)}
\end{array}\right)
$$
with
$$
m_{3j}:= \frac{-\e^{-\i\sigma_j ak}+\e^{\i(\sigma_j bk-\theta_1)}}{\sin ak}-\frac{\alpha}{k}
$$
and
$$
m_{4j}:= \frac{-\e^{\i(\sigma_j ak+ \sigma_j bk-\theta_1)}+1}{\sin ak}-\frac{\alpha}{k}\e^{\i(\sigma_j bk-\theta_1)}
$$
for $j=1,2$, where $\sigma_j:= (-1)^{j-1}$. A nontrivial solution of the form~\eqref{vlnfce} exists \emph{iff}  $\left(C_2^+,C_2^-,C_3^+, C_3^-\right)$ is a nonzero vector. Therefore, $k^2$ belongs to the spectrum of $H$ if~\eqref{matice} has a non-trivial solution for certain pair $(\theta_1,\theta_2)$, in other words, if there exist $\theta_1,\theta_2\in(-\pi,\pi]$ such that $\det(M)=0$. A~straightforward calculation leads to
\begin{equation}\label{det(M)}
\begin{split}
& \det(M) =-4\Big[
2\sin ak\cos bk\cos ck+2\cos ak\sin bk\cos ck \\
&\quad +2\cos ak\cos bk\sin ck -3\sin ak\sin bk\sin ck \\
&\quad -2\sin ak\cos(\theta_1-\theta_2)-2\sin ck\cos\theta_1-2\sin bk\cos\theta_2 \\
&\quad +2\frac{\alpha}{k}(\cos ak\sin bk\sin ck+\sin ak\cos bk\sin ck+\sin ak\sin bk\cos ck) \\
&\quad +\frac{\alpha^2}{k^2}\sin ak\sin bk\sin ck\Big]\frac{\e^{-\i(\theta_1+\theta_2)}}{\sin ak}\,.
\end{split}
\end{equation}
The spectral condition can be put into a more convenient form if we exclude all the ``Dirichlet points'', i.e. if we consider $k$ such that $\sin(\ell k)\neq 0$ holds for all $\ell\in\{a,b,c\}$. After a simple manipulation, we then obtain
\begin{equation*}
\begin{split}
\det(M) &
=-4 \bigg[
2(\cotg ak\cotg bk+\cotg ak\cotg ck+\cotg bk\cotg ck) \\
& +\cotg^2 ak+\cotg^2 bk+\cotg^2 ck-\frac{1}{\sin^2 ak}-\frac{1}{\sin^2 bk}-\frac{1}{\sin^2 ck} \\
& -2\left(\frac{\cos\theta_1}{\sin ak\sin bk}+\frac{\cos\theta_2}{\sin ak\sin ck}+\frac{\cos(\theta_1-\theta_2)}{\sin bk\sin ck}\right) \\
& +2\frac{\alpha}{k}(\cotg ak+\cotg bk+\cotg ck)+\frac{\alpha^2}{k^2}\bigg]\frac{\sin bk\sin ck}{\e^{\i(\theta_1+\theta_2)}}\,,
\end{split}
\end{equation*}
hence
\begin{equation*}
\begin{split}
\det(M)= & -4\left[
\left(\cotg ak+\cotg bk+\cotg ck+\frac{\alpha}{k}\right)^2 \right. \\
& -\frac{1}{\sin^2 ak}-\frac{1}{\sin^2 bk}-\frac{1}{\sin^2 ck} \\ & \hspace{-3em} \left.- 2\left(\frac{\cos\theta_1}{\sin ak\sin bk}+\frac{\cos\theta_2}{\sin ak\sin ck}+\frac{\cos(\theta_1-\theta_2)}{\sin bk\sin ck}\right)\right]\frac{\sin bk\sin ck}{\e^{\i(\theta_1+\theta_2)}}\,.
\end{split}
\end{equation*}
We can conclude that $k^2\in\sigma(H)$ holds if there are $\theta_1,\theta_2\in(-\pi,\pi]$ such that
\begin{multline}\label{SP theta}
\left(\cotg ak+\cotg bk+\cotg ck+\frac{\alpha}{k}\right)^2
=\frac{1}{\sin^2 ak}+\frac{1}{\sin^2 bk}+\frac{1}{\sin^2 ck} \\ +2\left(\frac{\cos\theta_1}{\sin ak\sin bk}+\frac{\cos\theta_2}{\sin ak\sin ck}+\frac{\cos(\theta_1-\theta_2)}{\sin bk\sin ck}\right)\,.
\end{multline}
The obtained spectral condition allows us to determine the positive part of the spectrum. This is sufficient if $\alpha\ge 0$, in the opposite case we have to take also negative energies into account. This can be done in a similar way, replacing the positive $k$ in the above considerations by $k=\i\kappa$ with $\kappa>0$. In particular, the condition (\ref{SP theta}) is then replaced by
\begin{multline}\label{SP theta neg}
\left(\coth a\kappa+\coth b\kappa+\coth c\kappa+\frac{\alpha}{\kappa}\right)^2
=\frac{1}{\sinh^2 a\kappa}+\frac{1}{\sinh^2 b\kappa}+\frac{1}{\sinh^2 c\kappa} \\ +2\left(\frac{\cos\theta_1}{\sinh a\kappa\sinh b\kappa}+\frac{\cos\theta_2}{\sinh a\kappa\sinh c\kappa}+\frac{\cos(\theta_1-\theta_2)}{\sinh b\kappa\sin c\kappa}\right)\,;
\end{multline}
in distinction to the previous case there is no need to exclude any values of the spectral parameter $\kappa$.

One important conclusion of these considerations is that the spectrum of $H$ is absolutely continuous away of the ``Dirichlet points''. This is a consequence of the following claim.
\begin{proposition} \label{p: nonconst}
The solution of the equation $\det(M)=0$ regarded as a function of $(\theta_1,\theta_2)$ is non-constant on any open subset of $(-\pi,\pi]^2$.
\end{proposition}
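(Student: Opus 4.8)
The plan is to read the statement directly off the closed form~\eqref{det(M)} of $\det(M)$, without ever constructing the band functions explicitly. The prefactor $\e^{-\i(\theta_1+\theta_2)}/\sin ak$ in~\eqref{det(M)} never vanishes, since the standing assumption is $\sin ak\neq0$; hence $\det(M)=0$ is equivalent to the vanishing of the real bracketed expression, which I denote by $P(\theta_1,\theta_2)$ for fixed values of $k,a,b,c,\alpha$. The crucial observation is that the entire dependence of $P$ on $(\theta_1,\theta_2)$ is carried by the three terms
\[
-2\sin ak\,\cos(\theta_1-\theta_2)-2\sin ck\,\cos\theta_1-2\sin bk\,\cos\theta_2\,,
\]
while all the remaining summands depend on $k$ (and the fixed parameters) alone.

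Next I would recast the assertion in a form that avoids any discussion of the regularity of the implicitly defined solution. Saying that ``the solution of $\det(M)=0$ is constant on an open set $U\subset(-\pi,\pi]^2$'' amounts to the existence of a single $k_0$, necessarily with $\sin ak_0\neq0$, such that $P(\theta_1,\theta_2)=0$ for every $(\theta_1,\theta_2)\in U$. I would assume this and derive a contradiction, which is logically equivalent to the proposition.

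Since $P$ vanishes identically on the open set $U$, so do all of its partial derivatives there. Differentiating the display above gives
\[
\frac{\partial^2 P}{\partial\theta_1\,\partial\theta_2}=-2\sin ak_0\,\cos(\theta_1-\theta_2)\,,
\]
because the $\cos\theta_1$ and $\cos\theta_2$ contributions, as well as the $k_0$-only terms, have vanishing mixed second derivative. This expression would have to be zero throughout $U$. But $U$ is a nonempty open subset of the torus and therefore contains points at which $\cos(\theta_1-\theta_2)\neq0$; combined with $\sin ak_0\neq0$ this is impossible. Hence no such $k_0$ exists and the solution is non-constant on every open subset.

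I do not expect a genuine obstacle here: the whole content is the structural remark that the Floquet phases enter $\det(M)$ only through a three-term trigonometric polynomial whose $\cos(\theta_1-\theta_2)$ mode cannot be killed as long as $\sin ak\neq0$. The only points requiring care are to phrase the claim through a fixed $k_0$ (so that the implicit function theorem is not needed) and to note that the standing assumption $\sin ak\neq0$ alone already guarantees the surviving mode, so that the argument goes through even when $b$ or $c$ happens to be a Dirichlet value. Finally, I would remark that this non-degeneracy is exactly what rules out flat bands away from the Dirichlet points, whence the absolute continuity of $\sigma(H)$ there.
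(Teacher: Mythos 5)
Your argument is essentially the paper's own: both reduce $\det(M)=0$ to the vanishing of the bracketed real expression and observe that its entire $(\theta_1,\theta_2)$-dependence sits in the trigonometric polynomial $A\cos(\theta_1-\theta_2)+C\cos\theta_1+B\cos\theta_2$ with $A=\sin ak_0$, which cannot be constant on an open set unless the relevant coefficient vanishes; your mixed second derivative $\partial^2/\partial\theta_1\partial\theta_2$, which isolates exactly the $\cos(\theta_1-\theta_2)$ mode killed by the standing assumption $\sin ak_0\neq0$, is a clean, concrete justification of the step the paper dismisses as ``obvious.'' The one piece you omit is the negative-energy branch: the paper's proof also treats $E=-\kappa^2$ (where the same polynomial appears with $\sinh a\kappa_0$, $\sinh b\kappa_0$, $\sinh c\kappa_0$ as coefficients and can never be constant since $\sinh a\kappa_0\neq0$ automatically), and this half is actually used later to conclude that flat bands can only occur at positive energy in the proof of Proposition~\ref{Bod nutna}; your argument extends verbatim, but you should say so.
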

\begin{proof}
Let us denote $F(\theta_1,\theta_2,k) =-\frac{\e^{\i(\theta_1 +\theta_2)}}{4}\det(M)$ and consider first the positive-energy solutions, i.e., values $k>0$ satisfying the condition $\sin ak\neq 0$. Obviously, a number $k$ solves $\det(M)=0$ for a pair $(\theta_1,\theta_2)\in(-\pi,\pi]^2$ if and only if $F(\theta_1,\theta_2,k)=0$. We use a \emph{reductio ad absurdum} argument. Suppose that the function $k=k(\theta_1,\theta_2)$ is constant on an open subset $J\subset(-\pi,\pi]^2$, i.e., let $F(\theta_1,\theta_2,k_0)=0$ hold for a $k_0>0$ and for every $(\theta_1,\theta_2) \in J$. Hence in view of~\eqref{det(M)} and the definition of $F$ we have
$$
\sin ak_0\cos(\theta_1-\theta_2)+\sin ck_0\cos\theta_1+\sin bk_0\cos\theta_2=\mathrm{const} \quad\text{on $J$}.
$$
The trigonometric polynomial $A\cos(\theta_1-\theta_2)+C\cos\theta_1 +B\cos\theta_2$ regarded as a function of two variables $(\theta_1,\theta_2)$ can be obviously constant on a non-empty open subset of $(-\pi,\pi]^2$ if and only if $A=B=C=0$ which in our case would mean $\sin ak_0=\sin bk_0=\sin ck_0=0$, however, this is excluded by the assumption. \\
In case of negative energies $-\kappa^2$ with $\kappa>0$ we have instead a condition
$$
\sinh a\kappa_0\cos(\theta_1-\theta_2)+\sinh c\kappa_0\cos\theta_1+\sinh b\kappa_0\cos\theta_2=\mathrm{const} \quad\text{on $J$},
$$
which can never be satisfied for $\kappa>0$.
\end{proof}

\noindent At the same time, the above argument allows us to \emph{prove Proposition~\ref{Bod nutna}}. Indeed, in view of the periodicity the point spectrum has necessarily an infinite multiplicity, corresponding to a ``flat band'', i.e. a solution to the spectral condition independent of $(\theta_1,\theta_2)$.
We have seen in Proposition~\ref{p: nonconst} that this can happen only if the energy is positive.
We can also exclude the case when all the edge lengths are commensurate as we already know that $\frac{b}{a}\in\Q$ and $\frac{c}{a}\in\Q$ implies $\sigma_\mathrm{p}(H)\ne\emptyset$.
Let $k^2>0$ and at least two of the lengths be incommensurate. Then $\sin ak$, $\sin bk$, and $\sin ck$ cannot vanish simultaneously. We choose a nonzero one and if needed renumber the edges in order to satisfy the assumption $\sin ak\ne 0$. Then Proposition~\ref{p: nonconst} implies that $k$ cannot corrrespond to a ``flat band''. \hfill $\Box$

\begin{corollary}
If $\frac{b}{a}\not\in\Q$ and $\frac{c}{a}\not\in\Q$, the spectrum of $H$ is purely absolutely continuous.
\end{corollary}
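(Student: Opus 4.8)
The plan is to establish the two defining requirements of pure absolute continuity separately, namely that $H$ has no eigenvalues and that it carries no singular continuous part. The absence of point spectrum is immediate from what is already proved: since $\frac{b}{a}\notin\Q$, the ratios $\frac{b}{a}$ and $\frac{c}{a}$ are certainly not both rational, so the contrapositive of Proposition~\ref{Bod nutna} yields $\sigma_\mathrm{p}(H)=\emptyset$. Note that this conclusion covers all energies, the Dirichlet ones included, so no flat eigenvalue can hide there.

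It remains to exclude a singular continuous component, and for this I would lean on the Floquet--Bloch machinery already set up. The operator $H$ is unitarily equivalent to a direct integral $\int^\oplus_{(-\pi,\pi]^2} H(\theta_1,\theta_2)\,\d\theta_1\,\d\theta_2$ of the fibre operators attached to the elementary cell. Because the cell consists of six edges of finite length, each fibre $H(\theta_1,\theta_2)$ has compact resolvent, hence a purely discrete spectrum; its eigenvalues are precisely the branches $E=k(\theta_1,\theta_2)^2$ of the solutions of $\det(M)=0$, equivalently of $F(\theta_1,\theta_2,k)=0$ with $F$ as introduced in the proof of Proposition~\ref{p: nonconst}. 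As $F$ is real-analytic in all its arguments, these band functions are real-analytic in $(\theta_1,\theta_2)$.

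The structural input is then Proposition~\ref{p: nonconst}: no band function is constant on any open subset of the connected Brillouin torus, and a real-analytic function that were constant on an open set would be constant everywhere, so in fact no band is flat. By the standard theory of periodic operators \cite[Chap.~4]{BK}, non-constant analytic band functions give rise to purely absolutely continuous spectrum, which is exactly the statement recorded above that $\sigma(H)$ is absolutely continuous away from the Dirichlet points. To upgrade this to a global claim I would observe that the Dirichlet energies are the zeros of $\sin ak\,\sin bk\,\sin ck$ (together with $k=0$), hence form a countable set; a singular continuous measure has no atoms and is singular with respect to Lebesgue measure, so it cannot be supported on a countable set, and were $\sigma_\mathrm{sc}(H)$ nonempty its support would meet the non-Dirichlet region, where the spectrum is absolutely continuous -- a contradiction. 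Thus $\sigma_\mathrm{sc}(H)=\emptyset$, and combined with $\sigma_\mathrm{p}(H)=\emptyset$ this shows $\sigma(H)$ is purely absolutely continuous.

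The one delicate point I foresee is the treatment of the Dirichlet energies in the fibre analysis. The reduction to $M$ involves entries containing $\frac{1}{\sin ak}$ and similar factors, so the \emph{parametrization} degenerates exactly at those values, even though the genuine fibre eigenvalues remain analytic there. The hard part will therefore be to check that the band functions extend analytically (or at least continuously) across these finitely degenerate energies, so that the measure-zero exceptional set is guaranteed not to support any singular continuous spectrum; everything else reduces to the general Floquet--Bloch a.c.\ theorem together with Propositions~\ref{Bod nutna} and~\ref{p: nonconst}.
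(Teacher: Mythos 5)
Your proposal is correct and follows essentially the same route as the paper: Proposition~\ref{Bod nutna} eliminates the point spectrum, and analyticity of the band functions (via the implicit function theorem applied to the spectral condition, combined with the non-flatness from Proposition~\ref{p: nonconst}) eliminates the singular continuous part. You are merely more explicit than the paper's two-line proof, in particular in disposing of the countable set of Dirichlet energies — a point the paper passes over silently and which your no-atoms argument already settles, so the ``delicate point'' you flag at the end is in fact already covered.
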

\begin{proof}
By Proposition~\ref{Bod nutna} the spectrum is purely continuous. By implicit-function theorem any solution to the conditions (\ref{SP theta}) is smooth, even analytic, hence singularly continuous spectrum is excluded.
\end{proof}

\noindent Let us add that if the edge lengths are commensurate, the operator may have infinitely degenerate eigenvalues, however, the implicit-function-theorem argument still works and the spectrum is absolutely continuous away from the ``Dirichlet points''.

\subsection{More about the spectral condition for $E=k^2>0$}\label{Sect. positive}

Consider again the positive part of the spectrum and examine the range of the right-hand side of~\eqref{SP theta} for $\theta_1,\theta_2 \in(-\pi,\pi]$. The range is obviously an interval. The maximum is found easily; using
$$
\frac{\cos\theta_1}{\sin ak\sin bk} \le \frac{1}{|\sin ak\sin bk|}
$$
and similar estimates for the other two $\theta$-dependent terms, we get
\begin{multline*}
\max_{\theta_1,\theta_2\in(-\pi,\pi]}\bigg\{\frac{1}{\sin^2 ak}+\frac{1}{\sin^2 bk}+\frac{1}{\sin^2 ck}+2\bigg(\frac{\cos\theta_1}{\sin ak\sin bk} \\ +\frac{\cos\theta_2}{\sin ak\sin ck} +\frac{\cos(\theta_1-\theta_2)}{\sin bk\sin ck}\bigg)\bigg\}
=\left(\frac{1}{|\sin ak|}+\frac{1}{|\sin bk|}+\frac{1}{|\sin ck|}\right)^2\,.
\end{multline*}
The maximum is obviously attained for $\theta_1,\theta_2$ chosen such that $\cos\theta_1=\sgn(\sin ak\sin bk)$, $\cos\theta_2=\sgn(\sin ak\sin ck)$. On the other hand, the minimum of the expression will be found using the following lemma which is not difficult to prove.

\begin{lemma}\label{Lem. 3.3}
Let $f(\theta_1,\theta_2)=A\cos(\theta_1-\theta_2)+B\cos\theta_2+C\cos\theta_1$ for $A,B,C\in\R$ such that $ABC>0$. It holds
\begin{itemize}
\item if $\frac{1}{|A|}+\frac{1}{|B|}+\frac{1}{|C|}\geq2\max\left\{\frac{1}{|A|},\frac{1}{|B|},\frac{1}{|C|}\right\}$, then
$$
\min_{\theta_1,\theta_2\in(-\pi,\pi]} f(\theta_1,\theta_2)=-\frac{ABC}{2}\left(\frac{1}{A^2}+\frac{1}{B^2}+\frac{1}{C^2}\right)\,;
$$
\item if $\frac{1}{|A|}+\frac{1}{|B|}+\frac{1}{|C|}\leq2\max\left\{\frac{1}{|A|},\frac{1}{|B|},\frac{1}{|C|}\right\}$, then
$$
\min_{\theta_1,\theta_2\in(-\pi,\pi]} f(\theta_1,\theta_2)=-(|A|+|B|+|C|)+2\min\{|A|,|B|,|C|\}\,.
$$
\end{itemize}
\end{lemma}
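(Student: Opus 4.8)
The plan is to collapse the two–variable optimization to a one–dimensional convex problem. First I would remove the signs: since $ABC>0$, the sign pattern of $(A,B,C)$ is immaterial. The maps $\theta_1\mapsto\theta_1+\pi$ and $\theta_2\mapsto\theta_2+\pi$ are bijections of $(-\pi,\pi]$ and transform $f$ into the same expression with the pairs $(A,C)$, respectively $(A,B)$, sign–flipped; together they generate exactly the even sign changes, i.e. those preserving $\sgn(ABC)$. Hence $\min f$ depends only on $|A|,|B|,|C|$, and I may assume $A,B,C>0$ throughout, after which the two displayed formulas read with absolute-value bars removed.

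Next, fixing $\theta_2$ and grouping the $\theta_1$–dependent part as $(A\cos\theta_2+C)\cos\theta_1+(A\sin\theta_2)\sin\theta_1$, its minimum over $\theta_1$ is $-\sqrt{(A\cos\theta_2+C)^2+(A\sin\theta_2)^2}=-\sqrt{A^2+C^2+2AC\cos\theta_2}$. Writing $t=\cos\theta_2\in[-1,1]$ reduces everything to minimizing $g(t)=Bt-\sqrt{A^2+C^2+2ACt}$ on $[-1,1]$. A direct computation gives $g''(t)=(AC)^2(A^2+C^2+2ACt)^{-3/2}>0$, so $g$ is strictly convex, and $g'(t)=B-AC(A^2+C^2+2ACt)^{-1/2}$ increases from $-\infty$ to $B>0$; thus there is a unique stationary point $t^\ast$, characterized by $\sqrt{A^2+C^2+2ACt^\ast}=AC/B$. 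This is exactly where the hypothesis $ABC>0$ is used, since it makes the right-hand side positive. Convexity then forces $\min_{[-1,1]}g$ to equal $g(t^\ast)$ when $t^\ast\in[-1,1]$ and the value at the nearer endpoint otherwise.

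Evaluating at the interior critical point yields $g(t^\ast)=-\tfrac12\!\left(\tfrac{BC}{A}+\tfrac{AC}{B}+\tfrac{AB}{C}\right)=-\tfrac{ABC}{2}\!\left(A^{-2}+B^{-2}+C^{-2}\right)$, the first–case formula. The membership $t^\ast\in[-1,1]$ is the heart of the matter: imposing $t^\ast\le1$ and $t^\ast\ge-1$ on the relation $\sqrt{A^2+C^2+2ACt^\ast}=AC/B$ and dividing by $AC$ gives precisely $|A^{-1}-C^{-1}|\le B^{-1}\le A^{-1}+C^{-1}$, which together with positivity is equivalent to the triangle inequality for $(A^{-1},B^{-1},C^{-1})$, i.e. to $\tfrac1{|A|}+\tfrac1{|B|}+\tfrac1{|C|}\ge 2\max\{\tfrac1{|A|},\tfrac1{|B|},\tfrac1{|C|}\}$. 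In the complementary regime one reciprocal exceeds the sum of the other two, which means the corresponding length is $\min\{A,B,C\}$; evaluating $g$ at the relevant endpoint, using $g(1)=B-(A+C)$ and $g(-1)=-B-|A-C|$, reproduces $-(A+B+C)+2\min\{A,B,C\}$.

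I expect the main obstacle to be bookkeeping rather than any conceptual difficulty: one must check carefully that $t^\ast>1$ forces $B=\min\{A,B,C\}$ while $t^\ast<-1$ forces $\min\{A,B,C\}\in\{A,C\}$, and then verify that the two endpoint evaluations of $g$ indeed collapse to the single expression $-(|A|+|B|+|C|)+2\min\{|A|,|B|,|C|\}$. The sign-reduction in the first step must also be stated cleanly, so that the final formulas, written with absolute values, are justified for every admissible sign pattern, not merely for $A,B,C>0$.
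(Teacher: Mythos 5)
The paper states Lemma~\ref{Lem. 3.3} without proof (``which is not difficult to prove''), so there is nothing to compare against; judged on its own, your argument is correct and complete. The sign reduction via $\theta_i\mapsto\theta_i+\pi$ does generate exactly the even sign changes, so $ABC>0$ legitimately reduces everything to $A,B,C>0$, and one checks that both displayed formulas are invariant under replacing $(A,B,C)$ by $(|A|,|B|,|C|)$ when $ABC>0$ (e.g.\ $ABC/A^2=|B||C|/|A|$). The inner minimization giving $g(t)=Bt-\sqrt{A^2+C^2+2ACt}$ on $t\in[-1,1]$, the convexity of $g$, the characterization $\sqrt{A^2+C^2+2ACt^\ast}=AC/B$ of the stationary point, the identification of $t^\ast\in[-1,1]$ with the triangle inequality $|A^{-1}-C^{-1}|\le B^{-1}\le A^{-1}+C^{-1}$ for the reciprocals, and the evaluations $g(t^\ast)=-\tfrac12\bigl(\tfrac{BC}{A}+\tfrac{AC}{B}+\tfrac{AB}{C}\bigr)$, $g(1)=B-(A+C)$, $g(-1)=-B-|A-C|$ are all verifiable as stated, and the deferred ``bookkeeping'' (that $t^\ast>1$ forces $1/B>1/A+1/C$ hence $B=\min$, that $t^\ast<-1$ forces the minimum to be $A$ or $C$, and that the formulas agree on the boundary case $t^\ast=\pm1$) goes through without surprises. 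The only cosmetic point worth fixing in a final write-up is the remark that ``this is exactly where $ABC>0$ is used'': the hypothesis is already consumed in the sign reduction, after which $AC/B>0$ is automatic.
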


\noindent Let us apply the result on the right-hand side of~\eqref{SP theta}. We need to set $A=\sin bk\sin ck$, $B=\sin ak\sin ck$, $C=\sin ak\sin bk$. Then the condition $\frac{1}{|A|}+\frac{1}{|B|}+\frac{1}{|C|}\geq2\max\left\{\frac{1}{|A|},\frac{1}{|B|},\frac{1}{|C|}\right\}$ can be shown to be equivalent to $\frac{1}{|\sin ak|}+\frac{1}{|\sin bk|}+\frac{1}{|\sin ck|}\geq2\max\left\{\frac{1}{|\sin ak|},\frac{1}{|\sin bk|},\frac{1}{|\sin ck|}\right\}$ (and similarly for the opposite sign). When we substitute the minima of $f$ found in Lemma~\ref{Lem. 3.3} into the right-hand side of~\eqref{SP theta}, we get

\begin{itemize}
\item zero if $\frac{1}{|\sin ak|}+\frac{1}{|\sin bk|}+\frac{1}{|\sin ck|}\geq2\max\left\{\frac{1}{|\sin ak|},\frac{1}{|\sin bk|},\frac{1}{|\sin ck|}\right\}$;
\item $\left(2\max\left\{\frac{1}{|\sin ak|},\frac{1}{|\sin bk|},\frac{1}{|\sin ck|}\right\}-\frac{1}{|\sin ak|}-\frac{1}{|\sin bk|}-\frac{1}{|\sin ck|}\right)^2$ otherwise.
\end{itemize}
The results on the minimum and maximum allow us to estimate the left-hand side of~\eqref{SP theta} from below and above; taking the square roots we get the condition
\begin{eqnarray*}
\lefteqn{\max\bigg\{0,2\max\left\{\frac{1}{|\sin ak|},\frac{1}{|\sin bk|},\frac{1}{|\sin ck|}\right\}-\bigg(\frac{1}{|\sin ak|}+\frac{1}{|\sin bk|}} \\ && \hspace{3em} +\frac{1}{|\sin ck|}\bigg)\bigg\}
\leq\left|\cotg ak+\cotg bk+\cotg ck+\frac{\alpha}{k}\right| \\ &&
\hspace{8em} \leq\frac{1}{|\sin ak|}+\frac{1}{|\sin bk|}+\frac{1}{|\sin ck|}\,. \phantom{AAAAAAAA}
\end{eqnarray*}
The first term at the left-hand side is obviously non-negative, hence we arrive at the conclusion which can be stated as two gap conditions:
\begin{itemize}
\item Condition GC1: $\sigma(H)$ has a gap if
\begin{equation}\label{shora gap}
\left|\cotg ak+\cotg bk+\cotg ck+\frac{\alpha}{k}\right|>\frac{1}{|\sin ak|}+\frac{1}{|\sin bk|}+\frac{1}{|\sin ck|}\,;
\end{equation}
\item Condition GC2: $\sigma(H)$ has a gap if
\begin{multline}\label{zdola gap}
2\max\left\{\frac{1}{|\sin ak|},\frac{1}{|\sin bk|},\frac{1}{|\sin ck|}\right\}-\left(\frac{1}{|\sin ak|}+\frac{1}{|\sin bk|}+\frac{1}{|\sin ck|}\right) \\
>\left|\cotg ak+\cotg bk+\cotg ck+\frac{\alpha}{k}\right|\,.
\end{multline}
\end{itemize}
We will consider them separately.

\subsection{Gap condition GC1}

\begin{observation}\label{stejna znamenka}
If the gap condition GC1 \eqref{shora gap} is satisfied, then
$$
\sgn(\cotg ak)=\sgn(\cotg bk)=\sgn(\cotg ck)=\sgn(\alpha) \quad\vee\quad k<|\alpha|\,.
$$
\end{observation}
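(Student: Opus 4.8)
The plan is to exploit the elementary pointwise bound $|\cotg \ell k|\le \frac{1}{|\sin \ell k|}$, valid for every $\ell\in\{a,b,c\}$ away from the Dirichlet points, which immediately yields $|S|\le R$, where I abbreviate $S:=\cotg ak+\cotg bk+\cotg ck$ and $R:=\frac{1}{|\sin ak|}+\frac{1}{|\sin bk|}+\frac{1}{|\sin ck|}$. Thus the cotangent sum alone can never trigger GC1; the coupling term must tip the balance, and this observation is the quantitative engine of the whole argument.

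First I would pin down the sign of the full left-hand side. Writing $t:=\alpha/k$, condition GC1 reads $|S+t|>R\ge|S|\ge0$, so in particular $\alpha\neq0$. Denoting $\epsilon:=\sgn(S+t)$ one has $\epsilon(S+t)=|S+t|>R\ge|S|\ge\epsilon S$, whence $\epsilon t=|S+t|-\epsilon S>R-|S|\ge0$; therefore $\sgn(\alpha)=\sgn(t)=\epsilon=\sgn(S+t)$. In other words, whenever GC1 holds the bracket $\cotg ak+\cotg bk+\cotg ck+\frac{\alpha}{k}$ necessarily carries the sign of $\alpha$.

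Next I would multiply GC1 through by $\epsilon=\sgn\alpha$ and regroup it termwise as $\sum_{\ell}\frac{1-\epsilon\cos(\ell k)\,\sgn(\sin\ell k)}{|\sin\ell k|}<\frac{|\alpha|}{k}$. Since $|\cos\ell k|\le1$, every summand is nonnegative, so each is bounded above by the right-hand side. It then remains to invoke the dichotomy: if $k\ge|\alpha|$, then $\frac{|\alpha|}{k}\le1$ and each summand lies strictly below $1$. Setting $u_\ell:=\epsilon\cos(\ell k)\,\sgn(\sin\ell k)$ and using $u_\ell^2=\cos^2\ell k$, so that $|\sin\ell k|=\sqrt{1-u_\ell^2}$ with $|u_\ell|<1$, the bound $\frac{1-u_\ell}{|\sin\ell k|}<1$ turns into $\sqrt{1-u_\ell}<\sqrt{1+u_\ell}$, which forces $u_\ell>0$, i.e. $\epsilon\cotg\ell k>0$. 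This gives $\sgn(\cotg ak)=\sgn(\cotg bk)=\sgn(\cotg ck)=\sgn(\alpha)$, so that $k<|\alpha|$ is the only escape from equal signs, as claimed.

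The only step that needs genuine care is this last reduction: the bookkeeping of $\sgn(\sin\ell k)$ inside $u_\ell$ and the identity $|\sin\ell k|=\sqrt{1-u_\ell^2}$, which relies on the exclusion of the Dirichlet points to guarantee $|u_\ell|<1$ strictly and hence a strict conclusion $u_\ell>0$. Everything else reduces to the triangle inequality combined with the pointwise estimate $|\cotg|\le1/|\sin|$.
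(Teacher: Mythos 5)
Your proof is correct, and it takes a genuinely different route from the paper's. The paper argues by \emph{reductio ad absurdum}: assuming $k\ge|\alpha|$ and that some $\cotg \ell k$ has sign opposite to $\alpha$, it isolates that single term and uses the bound $\left|\cotg \ell k+\frac{\alpha}{k}\right|\le\max\left\{|\cotg \ell k|,\frac{|\alpha|}{k}\right\}\le\frac{1}{|\sin \ell k|}$ (valid precisely because the two summands have opposite signs) to contradict GC1 via the triangle inequality. You instead argue directly: you first pin down $\sgn\left(\cotg ak+\cotg bk+\cotg ck+\frac{\alpha}{k}\right)=\sgn(\alpha)$ from $|S+t|>R\ge|S|$, then rewrite GC1 termwise as $\sum_\ell\frac{1-u_\ell}{|\sin\ell k|}<\frac{|\alpha|}{k}$ with nonnegative summands, and extract the sign of each cotangent separately from the per-term inequality $\frac{1-u_\ell}{\sqrt{1-u_\ell^2}}<1\iff u_\ell>0$. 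Both arguments are elementary and rest on $|\cotg x|\le\frac{1}{|\sin x|}$ away from the Dirichlet points, but yours buys something extra: the identity $\frac{1-u_\ell}{|\sin\ell k|}=\frac{1-|\cos\ell k|}{|\sin\ell k|}=\left|\tg\left(\left\{\frac{\ell k}{\pi}\right\}\frac{\pi}{2}\right)\right|$ (once $u_\ell>0$ is known) means your decomposition simultaneously delivers the reformulation of GC1 that the paper only obtains afterwards in Corollary~\ref{gap podm}, so the Observation and that Corollary come out of a single computation; the paper's version is shorter for the Observation alone but requires a separate manipulation for the Corollary. The one step in your write-up that deserves the explicit care you give it is the strictness $|u_\ell|<1$, which indeed follows from the exclusion of Dirichlet points and is what turns the weak per-term bound into the strict conclusion $u_\ell>0$ (in particular ruling out $\cotg\ell k=0$, consistent with $\alpha\ne0$).
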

\begin{proof}
We employ \emph{reductio ad absurdum}. Let $k\geq|\alpha|$ and, for instance, $\sgn(\cotg ak)=-\sgn(\alpha)$. We have
$$
\left|\cotg ak+\cotg bk+\cotg ck+\frac{\alpha}{k}\right|\leq|\cotg bk|+|\cotg ck|+\left|\cotg ak+\frac{\alpha}{k}\right|\,.
$$
Since $\cotg ak$ and $\alpha$ have opposite signs and $|\cotg x|\leq\frac{1}{|\sin x|}$ for any admissible $x\in\R$, it holds
$$
\left|\cotg ak+\frac{\alpha}{k}\right|\leq \max\left\{|\cotg ak|,\frac{|\alpha|}{k}\right\}\leq\max\left\{|\cotg ak|,1\right\}\leq\frac{1}{|\sin ak|}\,.
$$
Hence
$$
\left|\cotg ak+\cotg bk+\cotg ck+\frac{\alpha}{k}\right|\leq\frac{1}{|\sin ak|}+\frac{1}{|\sin bk|}+\frac{1}{|\sin ck|}\,,
$$
i.e., the gap condition GC1~\eqref{shora gap} is violated.
\end{proof}

Let $\|\cdot\|$ be the nearest-integer function on $\R$, that is, $\|x\|$ is the nearest integer to $x\in\R$. In the following we will need the function the value of which represents the difference between a given number and the nearest integer, i.e. $x\mapsto x-\|x\|$. For the sake of brevity, we introduce the symbol
\begin{equation}\label{symbol}
\{x\}:=x-\|x\|\,;
\end{equation}
it holds obviously $\{x\}\in[-1/2,1/2]$ for any $x\in\R$\,.

\begin{corollary}\label{gap podm}
For $k\geq|\alpha|$, the gap condition~\eqref{shora gap} is satisfied if and only if $\sgn(\cotg ak)=\sgn(\cotg bk)=\sgn(\cotg ck)=\sgn(\alpha)$ and
$$
\left|\tg\left(\left\{\frac{ak}{\pi}\right\}\frac{\pi}{2}\right)\right|
+\left|\tg\left(\left\{\frac{bk}{\pi}\right\}\frac{\pi}{2}\right)\right|
+\left|\tg\left(\left\{\frac{ck}{\pi}\right\}\frac{\pi}{2}\right)\right|<\frac{|\alpha|}{k}\,,
$$
where $\{\cdot\}$ is the function defined by~\eqref{symbol}.
\end{corollary}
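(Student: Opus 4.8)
The plan is to collapse the gap condition GC1 to the asserted tangent inequality by means of a single trigonometric identity, and to supply the sign condition from Observation~\ref{stejna znamenka}. The crucial identity is that, for every $x$ with $\sin x\neq0$,
$$\frac{1}{|\sin x|}-|\cotg x|=\left|\tg\left(\left\{\frac{x}{\pi}\right\}\frac{\pi}{2}\right)\right|.$$
I would prove this first. Writing $x=n\pi+r$ with $n=\|x/\pi\|$ the nearest integer and $r=\pi\{x/\pi\}\in[-\pi/2,\pi/2]$, the factors $(-1)^n$ cancel, so that $|\sin x|=|\sin r|$ and $|\cos x|=\cos r$ (the latter because $\cos r\ge0$ on $[-\pi/2,\pi/2]$). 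Hence the left-hand side equals $\frac{1-\cos r}{|\sin r|}$, and the half-angle formulas $1-\cos r=2\sin^2(r/2)$ and $|\sin r|=2|\sin(r/2)|\cos(r/2)$ turn this into $|\tg(r/2)|=\left|\tg\left(\{x/\pi\}\frac{\pi}{2}\right)\right|$. Note that this identity holds for all admissible $x$ irrespective of the sign of $\cotg x$.

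Next I would exploit the sign hypothesis. Assume $\sgn(\cotg ak)=\sgn(\cotg bk)=\sgn(\cotg ck)=\sgn(\alpha)=:\varepsilon$. Then all four summands of $\cotg ak+\cotg bk+\cotg ck+\frac{\alpha}{k}$ carry the common sign $\varepsilon$, whence
$$\left|\cotg ak+\cotg bk+\cotg ck+\frac{\alpha}{k}\right|=|\cotg ak|+|\cotg bk|+|\cotg ck|+\frac{|\alpha|}{k}.$$
Substituting this into~\eqref{shora gap} and moving the three $|\cotg\cdot|$ terms to the right, GC1 becomes
$$\frac{|\alpha|}{k}>\sum_{\ell\in\{a,b,c\}}\left(\frac{1}{|\sin \ell k|}-|\cotg \ell k|\right),$$
which by the identity above is \emph{exactly} the inequality claimed in the corollary. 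Since every step in this reduction is an equivalence, GC1 and the tangent inequality are equivalent as soon as the sign condition holds; this simultaneously yields the \emph{if} direction and the substantive part of the \emph{only if} direction.

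It remains to produce the sign condition in the \emph{only if} direction, and this is where the hypothesis $k\ge|\alpha|$ is used: by Observation~\ref{stejna znamenka}, validity of GC1 forces either the sign condition or $k<|\alpha|$, and the latter is excluded, so the sign condition holds and the previous paragraph delivers the tangent inequality. In the \emph{if} direction the sign condition is assumed, so nothing further is needed. I expect the only delicate point to be the bookkeeping of absolute values and signs in the identity — in particular verifying that the nearest-integer reduction really places $r$ in $[-\pi/2,\pi/2]$, so that $\cos r\ge0$ and the identity carries no sign ambiguity. The degenerate values $\cotg \ell k=0$, where $\{\ell k/\pi\}=\pm\tfrac12$, are automatically consistent, since there both sides of the identity equal $1$ while the sign condition fails for $\alpha\neq0$.
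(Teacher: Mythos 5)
Your proof is correct and follows essentially the same route as the paper: invoke Observation~\ref{stejna znamenka} to extract the sign condition from GC1 when $k\ge|\alpha|$, use the common sign to replace the modulus of the sum by the sum of moduli, and reduce the resulting inequality via the identity $\frac{1}{|\sin x|}-|\cotg x|=\left|\tg\left(\left\{\frac{x}{\pi}\right\}\frac{\pi}{2}\right)\right|$. The only (cosmetic) difference is that you establish this identity by the nearest-integer substitution $x=n\pi+r$, whereas the paper splits into the cases $\cos x>0$ and $\cos x<0$ and identifies the result as $\min\left\{\left|\tg\frac{x}{2}\right|,\left|\cotg\frac{x}{2}\right|\right\}$.
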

\begin{proof}
Suppose that $k\geq|\alpha|$ and~\eqref{shora gap} holds. It follows from Observation~\ref{stejna znamenka} that condition~\eqref{shora gap} implies $\sgn(\cotg ak)=\sgn(\cotg bk)=\sgn(\cotg ck)=\sgn(\alpha)$. The inequality~\eqref{shora gap} is thus equivalent to $\sgn(\cotg ak)=\sgn(\cotg bk)=\sgn(\cotg ck)=\sgn(\alpha)$ together with
\begin{equation}\label{opacna 2}
|\cotg ak|+|\cotg bk|+|\cotg ck|+\frac{|\alpha|}{k} >\frac{1}{|\sin ak|}+\frac{1}{|\sin bk|}+\frac{1}{|\sin ck|}\,.
\end{equation}
For any $x\in\R$, it holds
\begin{multline*}
\frac{1}{|\sin x|}-|\cotg x|=\frac{1-|\cos x|}{|\sin x|}=\left\{\begin{array}{ll}
\frac{1-\cos x}{|\sin x|}=\left|\tg\frac{x}{2}\right| & \text{for } \cos x>0 \\[.5em]
\frac{1+\cos x}{|\sin x|}=\left|\cotg\frac{x}{2}\right| & \text{for } \cos x<0
\end{array}\right\}
\\
=\min\left\{\left|\tg\frac{x}{2}\right|,\left|\cotg\frac{x}{2}\right|\right\}
=\left|\tg\left(\left\{\frac{x}{2}\cdot\frac{2}{\pi}\right\}\frac{\pi}{2}\right)\right|
=\left|\tg\left(\left\{\frac{x}{\pi}\right\}\frac{\pi}{2}\right)\right|\,.
\end{multline*}
Consequently, \eqref{opacna 2} can be rewritten as
$$
\left|\tg\left(\left\{\frac{ak}{\pi}\right\}\frac{\pi}{2}\right)\right|
+\left|\tg\left(\left\{\frac{bk}{\pi}\right\}\frac{\pi}{2}\right)\right|
+\left|\tg\left(\left\{\frac{ck}{\pi}\right\}\frac{\pi}{2}\right)\right|<\frac{|\alpha|}{k}\,.
$$
To conclude, the last inequality together with the condition $\sgn(\cotg ak)=\sgn(\cotg bk)=\sgn(\cotg ck)=\sgn(\alpha)$ is equivalent to gap condition~\eqref{shora gap}, as we have set up to prove.
\end{proof}

\begin{observation}
Local minima of the function
$$
F(k):=\left|\tg\left(\left\{\frac{ak}{\pi}\right\}\frac{\pi}{2}\right)\right|+\left|\tg\left(\left\{\frac{bk}{\pi}\right\}\frac{\pi}{2}\right)\right|+\left|\tg\left(\left\{\frac{ck}{\pi}\right\}\frac{\pi}{2}\right)\right|
$$
for $k>0$ occur at the points $\frac{m\pi}{a}$, $\frac{m\pi}{b}$, $\frac{m\pi}{c}$ with $m\in\N$.
\end{observation}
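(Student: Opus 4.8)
The plan is to treat $F$ one summand at a time. Writing $F=g_a+g_b+g_c$ with $g_\ell(k):=\left|\tg\left(\left\{\frac{\ell k}{\pi}\right\}\frac{\pi}{2}\right)\right|$, I would first record the identity $g_\ell(k)=\frac{1-|\cos \ell k|}{|\sin \ell k|}$ already established in the proof of Corollary~\ref{gap podm}. From either form one reads off the structure of a single term: $g_\ell$ is continuous and $\frac{\pi}{\ell}$-periodic in $k$, it is nonnegative, and it vanishes exactly at the points $k=\frac{m\pi}{\ell}$, which are therefore its only minima. I would then check that on each open interval between a zero $\frac{m\pi}{\ell}$ and the neighbouring maximum $\frac{(2m+1)\pi}{2\ell}$ the term $g_\ell$ is smooth and strictly convex, the maxima all having value $1$; a short computation of one-sided derivatives yields the slopes $\mp\frac{\ell}{2}$ at each zero (a downward corner) and $\pm\ell$ at each maximum (an upward corner).

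Next I would locate the minima of the sum through its one-sided derivatives $D^{\pm}F$. At any point one has $D^{-}F-D^{+}F=\sum_{\ell}\left(D^{-}g_\ell-D^{+}g_\ell\right)$, the summand being $2\ell>0$ at a maximum of $g_\ell$, equal to $-\ell<0$ at a zero of $g_\ell$, and $0$ otherwise. Since a local minimum requires $D^{-}F\leq 0\leq D^{+}F$, hence $D^{-}F-D^{+}F\leq 0$, no maximum-corner of any summand can carry a local minimum of $F$; this rules out the points $\frac{(2m+1)\pi}{2\ell}$. The downward corners $\frac{m\pi}{\ell}$ are exactly the points at which the one-sided derivative jumps \emph{upward} and at which $F$ may turn from decreasing to increasing, so these are the candidate minima. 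To confirm that they are the operative ones I would linearize near a simultaneous near-zero of the phases: when $\left\{\frac{\ell k}{\pi}\right\}$ is small for all $\ell$, one has $g_\ell(k)\approx\frac{\ell}{2}\left|k-\frac{m_\ell\pi}{\ell}\right|$, so to leading order $F$ is a weighted sum of absolute values $\sum_\ell\frac{\ell}{2}\left|k-\frac{m_\ell\pi}{\ell}\right|$, whose minimizer is always a break-point, i.e.\ one of the points $\frac{m_\ell\pi}{\ell}$. This pins the deep minima of $F$ to the asserted set.

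The step I expect to be the main obstacle is excluding critical points in the smooth interior. On any open interval avoiding the points $\frac{m\pi}{2\ell}$ the sum $F$ is strictly convex, so an interior solution of $g_a'+g_b'+g_c'=0$ would \emph{automatically} be a local minimum lying off the claimed set; such a cancellation needs the three derivatives to have mixed signs, which does happen on stretches between two neighbouring maximum-corners. The way I would resolve this is to show that these interior minima are never \emph{deep}: on any smooth piece at least one phase stays bounded away from an integer, so the corresponding summand is bounded below by a positive constant, whereas at a downward corner $\frac{m\pi}{\ell}$ an entire summand vanishes. Consequently the only minima that can meet the gap condition $F(k)<\frac{|\alpha|}{k}$ of Corollary~\ref{gap podm} for large $k$ are those sitting at $\frac{m\pi}{a}$, $\frac{m\pi}{b}$, $\frac{m\pi}{c}$. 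Turning the phrase ``bounded below on every smooth piece'' into a quantitative bound, uniform in $k$, is the technical heart of the argument.
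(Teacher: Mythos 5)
The first thing to say is that the paper offers no proof of this Observation at all --- it is asserted without argument --- so there is no ``paper's approach'' to compare against and your proposal has to be judged on its own. Your structural analysis is correct and already goes beyond what the paper records: the decomposition $F=g_a+g_b+g_c$, the identity $g_\ell(k)=\frac{1-|\cos \ell k|}{|\sin \ell k|}$, the one-sided slopes $\mp\frac{\ell}{2}$ at the zeros and $\pm\ell$ at the maxima, and the exclusion of the maximum-corners via $D^-F-D^+F\le 0$ are all right. You are also right that the smooth interior is the real issue --- in fact the Observation as literally stated is false. Take, say, $a=2$ and $b=c$ irrational and close to $0.9$: for infinitely many $m$ a zero of $g_a$ at $\frac{m\pi}{2}$ falls just after simultaneous maxima of $g_b,g_c$, so $D^+F(\frac{m\pi}{2})\approx \frac{a}{2}-b-c<0$, while at the next corner (the maximum of $g_a$, at distance $\frac{\pi}{4}$) one computes $F'>0$; since $F$ is convex on the intervening smooth piece, $F'$ vanishes in its interior and $F$ has a genuine local minimum at none of the points $\frac{m\pi}{\ell}$. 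So no argument can prove the statement as written; what can and should be proved is the weaker claim actually used downstream, namely that the minima relevant to the condition $F(k)<\frac{|\alpha|}{k}$ --- those where $F$ is small --- are controlled by the values at the points $\frac{m\pi}{\ell}$.

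That is exactly the reformulation you reach, but your proposed route to it contains a false step and an acknowledged hole. The claim ``on any smooth piece at least one phase stays bounded away from an integer'' is wrong: near a simultaneous near-return, where all three of $\{\frac{ak}{\pi}\},\{\frac{bk}{\pi}\},\{\frac{ck}{\pi}\}$ have size $\varepsilon$, the consecutive corners $\frac{m\pi}{a},\frac{n\pi}{b},\frac{p\pi}{c}$ lie within $O(\varepsilon)$ of one another and every smooth piece between them keeps all three phases of size $O(\varepsilon)$. The correct mechanism is located at the critical point itself: there one has $\sum_\ell \sgn(u_\ell)\,\frac{\ell}{2}\sec^2\!\big(\frac{u_\ell}{2}\big)=0$ with $u_\ell=\ell k-m_\ell\pi$ and $\sec^2(\frac{u_\ell}{2})\in[1,2]$, so if all $|u_\ell|$ were small this would force $\pm a\pm b\pm c\approx 0$. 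Hence, unless one edge length is (nearly) the sum of the other two, every interior critical point has some $|u_\ell|$ bounded below, and therefore $F$ bounded below by a positive constant there --- which is the ``never deep'' statement you want. In the resonant case $\ell_1\approx\ell_2+\ell_3$ small interior minima do occur, and one must then check (using $g_\ell(k)\ge\frac{\ell}{4}|k-\frac{m_\ell\pi}{\ell}|$ and the proximity of the corners) that they are comparable, within a universal factor, to the value of $F$ at the adjacent corner, so that the estimates in Theorem~\ref{Proposition GC1} survive with adjusted constants. You have correctly located the technical heart of the matter, but as it stands the proposal both leans on a false intermediate claim and explicitly defers the quantitative bound, so it is a plan rather than a proof --- albeit one that diagnoses a defect in the Observation that the paper itself passes over in silence.
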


\noindent An immediate consequence, in combination with Corollary~\ref{gap podm}, is that the spectrum has open gaps for any $\alpha\ne 0$ when the lattice edges are commensurate. If at least two of them are not commensurate, existence of gaps due the condition GC1 depend on how fast the minima of $F(k)$ decrease as $k\to\infty$; we will discuss it in more detail in the next section.

\subsection{Gap condition GC2}

Obviously, condition GC2 can be satisfied only if $\frac{1}{|\sin\ell_1 k|}>\frac{1}{|\sin\ell_2 k|}+\frac{1}{|\sin\ell_3 k|}$ holds for a certain choice $\{\ell_1,\ell_2,\ell_3\}=\{a,b,c\}$. We begin with the following auxiliary result.

\begin{lemma}\label{Lem. trigon}
If $x_1,x_2,\ldots,x_N$ are all greater or equal to one and they satisfy $x_1>x_2+\cdots+x_N$, then
$$
x_1-\sum_{i=2}^N x_i < \sqrt{x_1^2-1}-\sum_{i=2}^N \sqrt{x_i^2-1}\,.
$$
\end{lemma}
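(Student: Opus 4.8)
The plan is to reduce the inequality to a one-variable statement about the function $g(x):=x-\sqrt{x^2-1}$. Moving $\sum_{i=2}^N\sqrt{x_i^2-1}$ to the left and $\sum_{i=2}^N x_i$ to the right, the assertion is seen to be \emph{exactly} equivalent to $g(x_1)<\sum_{i=2}^N g(x_i)$. On $[1,\infty)$ the function $g$ is positive — indeed $g(x)=(x+\sqrt{x^2-1})^{-1}$ — and strictly decreasing; these are the only structural facts I will need, together with one more, namely that $g$ is subadditive in the sense $g(u+v)\le g(u)+g(v)$ for $u,v\ge1$.

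First I would dispose of the easy half. Writing $S:=\sum_{i=2}^N x_i$, the hypothesis reads $x_1>S$; since the sum is nonempty and every $x_i\ge1$, we have $S\ge1$, so $S$ lies in the domain of $g$, and monotonicity gives $g(x_1)<g(S)$. It therefore suffices to prove $g(S)\le\sum_{i=2}^N g(x_i)$, and this follows by induction from subadditivity: decompose $S=x_2+(x_3+\cdots+x_N)$, note that each partial sum again lies in $[1,\infty)$, and peel off one term at a time (for a two-term sum the inequality is the bare subadditivity estimate). Chaining $g(x_1)<g(S)\le\sum_{i=2}^N g(x_i)$ then closes the argument, the first inequality being strict.

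The hard part — really the only computation — will be subadditivity of $g$. Here the key observation is that the linear terms cancel identically, $g(u)+g(v)-g(u+v)=\sqrt{(u+v)^2-1}-\sqrt{u^2-1}-\sqrt{v^2-1}$, so that subadditivity is equivalent, with \emph{no} case distinction, to the reverse-triangle-type inequality $\sqrt{(u+v)^2-1}\ge\sqrt{u^2-1}+\sqrt{v^2-1}$. Both sides are nonnegative, so I would square once; the cleanest bookkeeping comes from the substitution $u=\cosh s$, $v=\cosh t$ with $s,t\ge0$, under which the inequality reads $\sqrt{(\cosh s+\cosh t)^2-1}\ge\sinh s+\sinh t$ and the difference of the squared sides collapses to $1+2\cosh(s-t)$, which is manifestly positive. (Purely algebraically, squaring a second time reduces the statement to $4u^2+4v^2+4uv-3\ge0$, trivially true for $u,v\ge1$.) This settles subadditivity and, together with the two reductions above, completes the proof. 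The only subtlety demanding attention is the domain bookkeeping — verifying that every partial sum to which $g$ is applied is indeed at least $1$ — which is immediate from $x_i\ge1$.
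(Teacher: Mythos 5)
Your proof is correct and rests on exactly the two ingredients of the paper's own argument: the strict monotonicity of $g(x)=x-\sqrt{x^2-1}$ on $[1,\infty)$, proved in both cases via the identity $g(x)=(x+\sqrt{x^2-1})^{-1}$ (this is the paper's $N=2$ base case), and the superadditivity $\sqrt{(u+v)^2-1}\ge\sqrt{u^2-1}+\sqrt{v^2-1}$ for $u,v\ge1$ (this is the paper's inductive step, which merges two of the $x_i$ at a time rather than collapsing the whole tail into $S$ and then invoking monotonicity once, as you do). The only substantive difference is that you actually verify the superadditivity inequality, which the paper dismisses as ``a straightforward task.''
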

\begin{proof}
We prove the statement by induction in $N$. To begin with, we prove for $N=2$ and any $x_1>x_2\geq1$ the implication
$$
x_1>x_2 \quad\Rightarrow\quad x_1-x_2<\sqrt{x_1^2-1}-\sqrt{x_2^2-1}\,.
$$
We rewrite this statement as $x_1-\sqrt{x_1^2-1}<x_2-\sqrt{x_2^2-1}$, which is equivalent to
$$
\frac{1}{x_1+\sqrt{x_1^2-1}}<\frac{1}{x_2+\sqrt{x_2^2-1}}\,,
$$
and this is obviously valid under the assumption $x_1>x_2$. Next we assume that the claim holds true for an $N\geq 2$, and we want to demonstrate for any $x_1,x_2,\ldots,x_N,x_{N+1}\geq1$ the implication
$$
x_1>x_2+\cdots+x_{N+1} \quad\Rightarrow\quad x_1-\sum_{i=2}^{N+1} x_i < \sqrt{x_1^2-1}-\sum_{i=2}^{N+1}\sqrt{x_i^2-1}\,.
$$
We set $x_N+x_{N+1}=y$. The induction hypothesis applied on the $N$-tuple $x_1,\ldots,x_{N-1},y$ implies that $x_1-x_2-\cdots -x_{N-1}-(x_N+x_{N+1})$ is less than
$$
\sqrt{x_1^2-1}-\sum_{i=2}^{N-1} \sqrt{x_i^2-1} - \sqrt{(x_N+x_{N+1})^2-1}\,;
$$
thus it suffices to check for any $x_N,x_{N+1}\geq1$ the inequality
$$
\sqrt{(x_N+x_{N+1})^2-1}>\sqrt{x_N^2-1}+\sqrt{x_{N+1}^2-1}\,,
$$
which is a straightforward task.
\end{proof}

\begin{corollary}\label{Cor. trigon}
If $\frac{1}{|\sin\ell_1 k|}>\frac{1}{|\sin\ell_2 k|}+\frac{1}{|\sin\ell_3 k|}$ and $\alpha\cotg\ell_1 k\geq0$, then
$$
\frac{1}{|\sin\ell_1 k|}-\frac{1}{|\sin\ell_2 k|}-\frac{1}{|\sin\ell_3 k|}\leq\left|\cotg\ell_1 k+\cotg\ell_2 k+\cotg\ell_3+\frac{\alpha}{k}\right|\,.
$$
\end{corollary}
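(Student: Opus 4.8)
The plan is to bound the cotangent expression from below by converting each $|\cotg\ell_i k|$ into $\sqrt{x_i^2-1}$ with $x_i:=\frac{1}{|\sin\ell_i k|}$ and then invoking Lemma~\ref{Lem. trigon}. First I would record the elementary identity
$$|\cotg\ell_i k|=\sqrt{\frac{1}{\sin^2\ell_i k}-1}=\sqrt{x_i^2-1}\,,$$
which is valid whenever $\sin\ell_i k\neq0$; this is guaranteed here since the right-hand sides of the hypotheses are finite, forcing all three sines to be nonzero, and consequently $x_i\ge1$ for every $i$. The assumption $\frac{1}{|\sin\ell_1 k|}>\frac{1}{|\sin\ell_2 k|}+\frac{1}{|\sin\ell_3 k|}$ is exactly $x_1>x_2+x_3$, so Lemma~\ref{Lem. trigon} in the case $N=3$ applies and delivers
$$\frac{1}{|\sin\ell_1 k|}-\frac{1}{|\sin\ell_2 k|}-\frac{1}{|\sin\ell_3 k|}<|\cotg\ell_1 k|-|\cotg\ell_2 k|-|\cotg\ell_3 k|\,.$$

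The remaining step is to dominate the right-hand side of this last display by the quantity we actually wish to estimate, i.e. to prove
$$|\cotg\ell_1 k|-|\cotg\ell_2 k|-|\cotg\ell_3 k|\le\left|\cotg\ell_1 k+\cotg\ell_2 k+\cotg\ell_3 k+\frac{\alpha}{k}\right|\,.$$
This is precisely where the sign hypothesis $\alpha\cotg\ell_1 k\ge0$ is used. Writing $u=\cotg\ell_1 k$, $v=\cotg\ell_2 k$, $w=\cotg\ell_3 k$ and $t=\frac{\alpha}{k}$, the assumption together with $k>0$ gives $ut\ge0$, so $u$ and $t$ share a sign and hence $|u+t|=|u|+|t|$. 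The reverse triangle inequality then yields
$$|u+v+w+t|\ge|u+t|-|v|-|w|=|u|+|t|-|v|-|w|\ge|u|-|v|-|w|\,,$$
which is exactly the desired bound.

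Chaining the two displayed inequalities produces
$$\frac{1}{|\sin\ell_1 k|}-\frac{1}{|\sin\ell_2 k|}-\frac{1}{|\sin\ell_3 k|}<\left|\cotg\ell_1 k+\cotg\ell_2 k+\cotg\ell_3 k+\frac{\alpha}{k}\right|\,,$$
which is in fact strictly stronger than the asserted non-strict inequality. I expect no genuine difficulty here; the one point demanding care is the order of operations in the second step. One must merge $\frac{\alpha}{k}$ with $\cotg\ell_1 k$ \emph{before} applying the reverse triangle inequality, so that the sign condition can convert $|u+t|$ into $|u|+|t|$ and thereby preserve the gain $|t|$. Estimating each cotangent term separately would throw away this gain and break the chain, so the sign hypothesis must be exploited exactly at that junction.
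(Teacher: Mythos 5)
Your proposal is correct and follows essentially the same route as the paper: the sign hypothesis $\alpha\cotg\ell_1 k\ge0$ is used to merge $\frac{\alpha}{k}$ with $\cotg\ell_1 k$ before applying the reverse triangle inequality, and Lemma~\ref{Lem. trigon} with $N=3$, $x_i=\frac{1}{|\sin\ell_i k|}$ (so that $\sqrt{x_i^2-1}=|\cotg\ell_i k|$) supplies the other half of the chain. The paper merely presents the two steps in the opposite order, reducing the claim to the lemma rather than starting from it, and your observation that the resulting inequality is in fact strict is consistent with the lemma's strict conclusion.
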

\begin{proof}
In view of the assumption $\alpha\cotg\ell_1 k\geq0$ we have
\begin{eqnarray*}
\lefteqn{\left|\cotg\ell_1 k+\cotg\ell_2 k+\cotg\ell_3+\frac{\alpha}{k}\right|\geq\left|\cotg\ell_1 k+\frac{\alpha}{k}\right|-|\cotg\ell_2 k|} \\ && \hspace{3em} -|\cotg\ell_3 k| \geq|\cotg\ell_1 k|-|\cotg\ell_2 k|-|\cotg\ell_3|\,;
\end{eqnarray*}
thus it suffices to prove
$$
\frac{1}{|\sin\ell_1 k|}-\frac{1}{|\sin\ell_2 k|}-\frac{1}{|\sin\ell_3 k|}\leq|\cotg\ell_1 k|-|\cotg\ell_2 k|-|\cotg\ell_3 k|\,.
$$
This is, however, a straightforward consequence of Lemma~\ref{Lem. trigon}, it is enough to set $N=3$, $x_1=\frac{1}{|\sin\ell_1 k|}$, $x_2=\frac{1}{|\sin\ell_2 k|}$ and $x_3=\frac{1}{|\sin\ell_3 k|}$.
\end{proof}

\noindent To sum up, the condition GC2 can give rise to an open gap only if the greatest element of the set $\left\{\frac{1}{|\sin ak|},\frac{1}{|\sin bk|},\frac{1}{|\sin ck|}\right\}$ is greater than the sum of the other two and the sign of the corresponding cotangent is opposite to the sign of $\alpha$. In particular, condition GC2 gives rise to no open gaps in the Kirchhoff case, $\alpha=0$.

\subsection{Negative spectrum}

Let us finally discuss briefly the negative spectrum of $H$, which is obviously nonempty if and only if $\alpha<0$. Spectral condition~\eqref{SP theta neg} can be rephrased into two gap conditions, similarly as it has been done in Section~\ref{Sect. positive} for $E>0$. Specifically, the gap conditions for $E=-\kappa^2$ acquire the following form:
\begin{equation}\label{GC1 neg}
\left|\coth a\kappa+\coth b\kappa+\coth c\kappa+\frac{\alpha}{\kappa}\right|>\frac{1}{\sinh a\kappa}+\frac{1}{\sin b\kappa}+\frac{1}{\sin c\kappa}\,,
\end{equation}
\begin{equation}\label{GC2 neg}
\left|\coth a\kappa+\coth b\kappa+\coth c\kappa+\frac{\alpha}{\kappa}\right|<\frac{2}{\sinh\ell_{\min}\kappa}-\frac{1}{\sinh a\kappa}-\frac{1}{\sinh b\kappa}-\frac{1}{\sinh c\kappa}\,,
\end{equation}
where $\ell_{\min}:=\min\{a,b,c\}$. One can describe circumstances under which the spectrum has an open gap in its negative part.

\begin{proposition}
The negative part of $\sigma(H)$ contains a gap adjacent to zero exactly in the following two cases:
\begin{itemize}
\setlength{\itemsep}{5pt}
\item $|\alpha|>\frac{2}{a}+\frac{2}{b}+\frac{2}{c}$,
\item $\frac{2}{\ell_{\min}}>\frac{1}{a}+\frac{1}{b}+\frac{1}{c}\,$ and $\,\frac{2}{a}+\frac{2}{b}+\frac{2}{c}-\frac{2}{\ell_{\min}}<|\alpha|<\frac{2}{\ell_{\min}}$.
\end{itemize}
\end{proposition}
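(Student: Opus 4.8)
The plan is to reduce the statement to the two gap conditions \eqref{GC1 neg}--\eqref{GC2 neg} and then to examine their behaviour as $\kappa\to0+$, since a negative energy $-\kappa^2$ lies close to zero exactly when $\kappa$ is small. First I would record, for negative energies, the band--gap dichotomy obtained for $E>0$ in Section~\ref{Sect. positive}. Repeating the computation of the maximum and the minimum of the right-hand side of \eqref{SP theta neg} over $(\theta_1,\theta_2)$ is in fact cleaner here, because all three denominators $\sinh a\kappa,\sinh b\kappa,\sinh c\kappa$ are positive, so that the quantities $A=\frac{1}{\sinh b\kappa\sinh c\kappa}$, $B=\frac{1}{\sinh a\kappa\sinh c\kappa}$, $C=\frac{1}{\sinh a\kappa\sinh b\kappa}$ entering Lemma~\ref{Lem. 3.3} automatically satisfy $ABC>0$. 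One concludes that $-\kappa^2\notin\sigma(H)$ if and only if one of \eqref{GC1 neg}, \eqref{GC2 neg} holds. Hence the negative part of $\sigma(H)$ contains a gap adjacent to zero precisely when there is a $\kappa_0>0$ such that at least one of \eqref{GC1 neg}, \eqref{GC2 neg} is satisfied for \emph{every} $\kappa\in(0,\kappa_0)$.

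Second, I would expand everything around $\kappa=0$. Writing $S:=\frac1a+\frac1b+\frac1c$ and using $\coth x=\frac1x+\frac{x}{3}+O(x^3)$ together with $\frac{1}{\sinh x}=\frac1x-\frac{x}{6}+O(x^3)$, the basic quantities become
\[
\coth a\kappa+\coth b\kappa+\coth c\kappa+\frac{\alpha}{\kappa}=\frac{S+\alpha}{\kappa}+O(\kappa),\qquad \frac{1}{\sinh a\kappa}+\frac{1}{\sinh b\kappa}+\frac{1}{\sinh c\kappa}=\frac{S}{\kappa}+O(\kappa),
\]
and the threshold of GC2 satisfies $\frac{2}{\sinh\ell_{\min}\kappa}-\frac{1}{\sinh a\kappa}-\frac{1}{\sinh b\kappa}-\frac{1}{\sinh c\kappa}=\frac{2/\ell_{\min}-S}{\kappa}+O(\kappa)$. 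Since $\alpha<0$, the common left-hand side of both gap conditions behaves like $\bigl|\,S-|\alpha|\,\bigr|/\kappa$, and the sign requirement of Corollary~\ref{Cor. trigon} (namely $\alpha\coth\ell_1\kappa<0$) is automatic because $\coth\ell\kappa>0$ for small $\kappa$.

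Comparing the $\kappa^{-1}$ coefficients then produces the two cases. Condition \eqref{GC1 neg} holds for all small $\kappa$ iff $\bigl|\,S-|\alpha|\,\bigr|>S$, i.e. $|\alpha|>2S=\frac2a+\frac2b+\frac2c$, which is the first bullet. Condition \eqref{GC2 neg} can hold for small $\kappa$ only when its threshold is positive, that is $\frac{2}{\ell_{\min}}>S$, and then exactly when $\bigl|\,S-|\alpha|\,\bigr|<\frac{2}{\ell_{\min}}-S$; rearranging gives $\frac2a+\frac2b+\frac2c-\frac{2}{\ell_{\min}}<|\alpha|<\frac{2}{\ell_{\min}}$, the second bullet. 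A one-line check that $\frac{1}{\ell_{\min}}<S$ shows $\frac{2}{\ell_{\min}}<2S$, so the two $|\alpha|$-ranges are disjoint and the disjunction of the two bullets is precisely the condition for a gap adjacent to zero.

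The main obstacle I anticipate is the treatment of the boundary values at which the $\kappa^{-1}$ coefficients of the two sides coincide, namely $|\alpha|=2S$ for GC1 and $|\alpha|=\frac{2}{\ell_{\min}}$ or $|\alpha|=2S-\frac{2}{\ell_{\min}}$ for GC2. There the leading terms cancel and one must keep the $O(\kappa)$ corrections, using the linear terms $+\frac{\kappa}{3}(a+b+c)$ from $\coth$ and $-\frac{\kappa}{6}(a+b+c)$, $-\frac{\ell_{\min}\kappa}{3}$ from the $1/\sinh$ expansions, in order to decide on which side of the band edge $-\kappa^2$ falls for all small $\kappa$. This second-order comparison is exactly what settles the inclusion or exclusion of each endpoint, and hence the strictness of the inequalities in the statement; the remaining bookkeeping, in particular that no length other than $\ell_{\min}$ can govern the maximum in GC2 as $\kappa\to0+$, is routine once the expansions are in place.
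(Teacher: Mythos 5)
Your proposal is correct and follows essentially the same route as the paper: reduce to the two negative-energy gap conditions \eqref{GC1 neg}--\eqref{GC2 neg} and compare the $\kappa^{-1}$ coefficients of both sides as $\kappa\searrow0$, which yields exactly the two bullet conditions. Your additional remarks on the endpoint cases and on disjointness of the two $|\alpha|$-ranges go slightly beyond what the paper records, but the core argument is the same.
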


\begin{proof}
We begin with condition~\eqref{GC1 neg} and compare the asymptotic behavior of the two sides in the limit $\kappa\searrow0$. Up to higher-order term we have
$$
\left|\coth a\kappa+\coth b\kappa+\coth c\kappa-\frac{|\alpha|}{\kappa}\right|\approx
\frac{1}{\kappa}\left|\frac{1}{a}+\frac{1}{b}+\frac{1}{c}-|\alpha|\right|\,,
$$
$$
\frac{1}{|\sinh a\kappa|}+\frac{1}{|\sin b\kappa|}+\frac{1}{|\sin c\kappa|}\approx\frac{1}{\kappa}\left(\frac{1}{a}+\frac{1}{b}+\frac{1}{c}\right)\,,
$$
hence the first gap condition can be for small values of $\kappa$ satisfied provided $\left|\frac{1}{a}+\frac{1}{b} +\frac{1}{c}-|\alpha|\right|> \frac{1}{a}+\frac{1}{b} +\frac{1}{c}$, which is true if and only if $|\alpha|>\frac{2}{a}+\frac{2}{b}+\frac{2}{c}$. Let us proceed to \eqref{GC2 neg}. In the regime $\kappa\searrow0$ we have
$$
\frac{2}{\sinh\ell_{\min}\kappa}-\frac{1}{\sinh a\kappa}-\frac{1}{\sinh b\kappa}-\frac{1}{\sinh c\kappa}\approx\frac{1}{\kappa}\left(\frac{2}{\ell_{\min}}
-\frac{1}{a}-\frac{1}{b}-\frac{1}{c}\right)\,,
$$
therefore the condition acquires for small values of $\kappa$ the form
$$
\left|\frac{1}{a}+\frac{1}{b}+\frac{1}{c}-|\alpha|\right|
<\frac{2}{\ell_{\min}}-\frac{1}{a}-\frac{1}{b}-\frac{1}{c}\,.
$$
This inequality can be satisfied only if $\frac{2}{\ell_{\min}} >\frac{1}{a}+\frac{1}{b}+\frac{1}{c}$, and under this condition it is valid \emph{iff} $\frac{2}{a}+\frac{2}{b}+\frac{2}{c} -\frac{2}{\ell_{\min}}<|\alpha|<\frac{2}{\ell_{\min}}$.
\end{proof}

\section{The case $b=c$} \label{s: b=c}

The spectral picture with respect to all four parameters of the model is rather complex. In order to simplify the discussion, we focus from now on at the case when the lattice can be stretched in one direction only assuming $b=c$. The above gap conditions acquire then the following form,

\begin{itemize}
\item Condition GC1: $\sigma(H)$ has a gap if
\begin{equation}\label{GC1 b=c}
\left|\cotg ak+2\cotg bk+\frac{\alpha}{k}\right|>\frac{1}{|\sin ak|}+\frac{2}{|\sin bk|}\,;
\end{equation}
\item Condition GC2: $\sigma(H)$ has a gap if
\begin{equation}\label{GC2 b=c}
\frac{1}{|\sin ak|}-\frac{2}{|\sin bk|}>\left|\cotg ak+2\cotg bk+\frac{\alpha}{k}\right|\,;
\end{equation}
note that \eqref{GC2 b=c} cannot be satisfied if $|\sin bk|<|\sin ak|$.
\end{itemize}

\subsection{Gap condition GC1 for $b=c$}

According to Corollary~\ref{gap podm}, the gap condition is equivalent to the conditions $\sgn(\cotg ak)=\sgn(\cotg bk)=\sgn(\alpha)$ and $F(k)<\frac{|\alpha|}{k}$, where
$$
F(k)=\left|\tg\left(\left\{\frac{ak}{\pi}\right\}\frac{\pi}{2}\right)\right|
+2\left|\tg\left(\left\{\frac{bk}{\pi}\right\}\frac{\pi}{2}\right)\right|
$$
with $\{\cdot\}$ defined by~\eqref{symbol}. To state the next result, we have to introduce two classes of irrational numbers. A $\theta\in\R$ is called \emph{badly approximable} if there exists a constant $\gamma>0$ such that $\left|\theta -\frac{p}{q}\right|>\frac{\gamma}{q^2}$ holds for all $p,q\in\N$. Irrational numbers that do not have this property will be called, following \cite{Ex96}, \emph{Last admissible} \cite{La94}. Thus a $\theta\in\R\backslash\Q$ is Last admissible if there exist increasing integer sequences $\left\{p_n\right\}_{n=1}^\infty$, $\left\{q_n\right\}_{n=1}^\infty$ such that $\lim_{n\to\infty} q_n^2\left|\theta-\frac{p_n}{q_n}\right|=0$. Another way to characterize them is through the continued fraction representation: a number $\theta=[a_0;a_1,a_2,\dots]$ belongs to the class of Last admissible numbers if the coefficient sequence $\{a_j\}$ is unbounded.

\begin{theorem}\label{Proposition GC1}
Let $\theta=\frac{a}{b}$.
\begin{itemize}
\item[(i)] If $\theta\in\Q$, then the gap condition GC1 generates infinitely many gaps in the spectrum of $H$ for any $\alpha\neq0$.
\item[(ii)] If $\theta$ in a Last admissible irrational number, then the gap condition GC1 generates infinitely many gaps for any $\alpha\neq0$.
\item[(iii)] Let $\theta$ be a badly approximable irrational number. There is a positive $\alpha_0$ such that the condition GC1 generates no gaps provided $0\le |\alpha|\le\alpha_0$. On the other hand, if the coupling constant satisfies $|\alpha|> \frac{4\pi}{\sqrt{5}}\min\{\frac{2}{a},\frac{1}{b}\}$, there are infinitely many gaps.
\end{itemize}
\end{theorem}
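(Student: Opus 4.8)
The plan is to reduce all three parts to a single inequality and then feed in the relevant number theory. By Corollary~\ref{gap podm}, for $k\ge|\alpha|$ the gap condition GC1 holds exactly when $\sgn(\cotg ak)=\sgn(\cotg bk)=\sgn(\alpha)$ and $F(k)<\frac{|\alpha|}{k}$, with $F(k)=\left|\tg\left(\left\{\frac{ak}{\pi}\right\}\frac{\pi}{2}\right)\right|+2\left|\tg\left(\left\{\frac{bk}{\pi}\right\}\frac{\pi}{2}\right)\right|$. Since $\frac{\pi}{2}|y|\le|\tg(y\frac{\pi}{2})|\le 2|y|$ for $y\in[-\frac12,\frac12]$, the size of $F$ near the points where one summand vanishes is governed by the distance of $\theta=\frac ab$ (or of $\theta^{-1}$) to the integers: at $k=\frac{n\pi}{b}$ the double summand vanishes and $F(\frac{n\pi}{b})=|\tg(\{n\theta\}\frac{\pi}{2})|\asymp|\{n\theta\}|$, whereas at $k=\frac{m\pi}{a}$ the first summand vanishes and $F(\frac{m\pi}{a})=2|\tg(\{m/\theta\}\frac{\pi}{2})|\asymp|\{m/\theta\}|$. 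By the Observation preceding the theorem these are precisely the local minima of $F$, so a genuine gap appears near such a point whenever $k\,F(k)$ is smaller than $|\alpha|$ there and the sign requirement can be met on a one-sided neighbourhood. The whole theorem then becomes a translation of the arithmetic of $\{n\theta\}$ into statements about $k\,F(k)$.

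I would treat (i) and (ii) together, since both yield infinitely many gaps for every $\alpha\ne0$. If $\theta=\frac pq\in\Q$, then $\{n\theta\}=0$ whenever $q\mid n$, so $F$ vanishes at the corresponding points $k=\frac{n\pi}{b}$; by continuity $F(k)<\frac{|\alpha|}{k}$ holds on a punctured one-sided neighbourhood of each such point, and choosing the side according to $\sgn(\alpha)$ makes both cotangents blow up with the correct sign, producing a gap. If $\theta$ is Last admissible, its definition supplies increasing integers $p_j,q_j$ with $q_j^2\left|\theta-\frac{p_j}{q_j}\right|\to0$, i.e. $q_j|\{q_j\theta\}|\to0$; taking $k_j=\frac{q_j\pi}{b}$ gives $k_j\,F(k_j)\le\frac{2\pi}{b}\,q_j|\{q_j\theta\}|\to0$, so $F(k_j)<\frac{|\alpha|}{k_j}$ for all large $j$ and, after the one-sided sign adjustment, infinitely many gaps arise for any fixed $\alpha\ne0$.

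For (iii) I would exploit the Diophantine characterisation of the two classes in opposite directions. If $\theta$ is badly approximable there is $\gamma>0$ with $|\{q\theta\}|>\frac{\gamma}{q}$ for all $q$, and the same holds for $\theta^{-1}$ since its continued fraction is a shift of that of $\theta$; hence at every local minimum $k^*$ of $F$ one has $k^*F(k^*)\ge c_0>0$ with $c_0$ depending only on $a,b,\gamma$, and because $F$ exceeds its local-minimum value on each interval between consecutive minima (whose ratio of endpoints tends to $1$), this upgrades to $k\,F(k)\ge c_1>0$ for all $k\ge K_0$. Setting $\alpha_0=c_1$ forces $F(k)\ge\frac{|\alpha|}{k}$ whenever $|\alpha|\le\alpha_0$ and $k\ge K_0$, so GC1 can hold only in the initial segment $k<K_0$, which I would verify corresponds to energies below $\inf\sigma(H)$ rather than to a genuine gap. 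For the converse I would invoke Hurwitz's theorem: every irrational $\theta$ admits infinitely many $q$ with $q|\{q\theta\}|<\frac{1}{\sqrt5}$, and likewise for $\theta^{-1}$. Combining this with $|\tg(y\frac{\pi}{2})|\le 2|y|$ controls $k\,F(k)$ at the centres of the two families $k=\frac{q\pi}{b}$ and $k=\frac{q\pi}{a}$ by $\frac{2\pi}{\sqrt5\,b}$ and $\frac{4\pi}{\sqrt5\,a}$ respectively; the additional factor $2$ and the form $\frac{4\pi}{\sqrt5}\min\{\frac2a,\frac1b\}$ of the stated threshold then come from having to keep $F(k)<\frac{|\alpha|}{k}$ on the entire sign-compatible sub-interval joining a neighbouring pair of $a$- and $b$-Dirichlet points, together with the freedom to optimise over which family is used.

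The main obstacle is this last bookkeeping in part (iii): whereas the value of $F$ at a single Dirichlet centre is easy to estimate, a genuine gap requires the strict inequality $F(k)<\frac{|\alpha|}{k}$ and the coincidence of all cotangent signs with $\sgn(\alpha)$ throughout an interval. When $\{q\theta\}$ is small the points $\frac{q\pi}{b}$ and $\frac{\|q\theta\|\pi}{a}$ nearly coincide, and both cotangents flip sign there; pinning down the sub-interval on which both match $\sgn(\alpha)$, and bounding the maximum of $F$ over it, is exactly what produces the extra factor $2$ and the $\min\{\frac2a,\frac1b\}$. The remaining estimates (the convexity bounds on $\tg$, the transfer of badly-approximable and Hurwitz bounds between $\theta$ and $\theta^{-1}$, and the identification of the low-$k$ region with the sub-threshold part of the spectrum) are routine.
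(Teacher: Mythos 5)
Your strategy coincides with the paper's: reduce GC1 via Corollary~\ref{gap podm} to the inequality $F(k)<\frac{|\alpha|}{k}$ plus the sign condition, evaluate $F$ at its local minima $\frac{m\pi}{a}$, $\frac{m\pi}{b}$ using $\frac{\pi}{2}|y|\le\left|\tg\left(y\frac{\pi}{2}\right)\right|\le 2|y|$, and feed in the Diophantine properties of $\theta$ and $\theta^{-1}$. Parts (i), (ii) and both halves of (iii) are organised exactly as in the paper, and your explicit upgrade of the bound at the local minima of $F$ to a bound between consecutive minima in (iii) is a point the paper in fact passes over.

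There is, however, one step you gloss over that does not take care of itself. At $k_n=\frac{q_n\pi}{b}$ only $\cotg bk$ changes sign; $\cotg ak_n$ is large with the \emph{fixed} sign $\sgn\left(\left\{\frac{a}{b}q_n\right\}\right)=\sgn\left(\theta-\frac{p_n}{q_n}\right)$ throughout the relevant neighbourhood, so the ``one-sided sign adjustment'' you invoke in (ii) can only repair the $\cotg bk$ factor. If the approximants happened to satisfy $\sgn\left(\theta-\frac{p_n}{q_n}\right)=-\sgn(\alpha)$, the sign condition of Corollary~\ref{gap podm} would fail on \emph{both} sides of $k_n$ and no gap would be produced there. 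The paper handles this by selecting from the outset Last-admissible (resp.\ Hurwitz) approximating sequences lying on the side of $\theta$ prescribed by $\sgn(\alpha)$ --- a separate one-sided approximation fact that must be stated and used; the same selection is needed in the second half of (iii). Conversely, the interval bookkeeping you single out as the ``main obstacle'' in (iii) is not actually required: once $F(k_n)<\frac{|\alpha|}{k_n}$ holds strictly at the centre and $\cotg ak_n$ has the right sign, continuity yields an open one-sided neighbourhood on which GC1 holds, with no need to bound the maximum of $F$ over a whole inter-Dirichlet interval. The paper's computation in fact gives the sharper threshold $\frac{2\pi}{\sqrt{5}}\min\left\{\frac{2}{a},\frac{1}{b}\right\}$, so your attempt to derive the stated constant from such bookkeeping is explaining a factor of $2$ that the argument does not produce.
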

\begin{proof}
(i) If $\theta$ is rational, then there are obviously infinitely many positive integers $m$ such that $am,bm$ are even numbers, and therefore, $F(k)=0$ holds for $k=m\pi$. Moreover, if $k=m\pi+\delta$ for a sufficiently small $\delta>0$, it holds $F(k+\sgn(\alpha)\cdot\delta)<\frac{|\alpha|}{k}$ and $\sgn(\cotg ak)=\sgn(\cotg bk)=\sgn(\alpha)$. Corollary~\ref{gap podm} then implies the existence of infinitely many gaps.

(ii) If $\theta$ is Last admissible, then there exist increasing integer sequences $\left\{p_n\right\}_{n=1}^\infty$, $\left\{q_n\right\}_{n=1}^\infty$ such that $\lim_{n\to\infty} q_n^2\left|\theta-\frac{p_n}{q_n}\right|=0$. Moreover, one can find sequences having, in addition, the property $\theta-\frac{p_n}{q_n}>0$ or $\theta-\frac{p_n}{q_n}<0$, respectively. Let us choose the sequences such that $\sgn\left(\theta -\frac{p_n}{q_n}\right)=\sgn(\alpha)$, and set $k_n=\frac{q_n\pi}{b}$. Obviously, $\sgn(\cotg(a k_n))=\sgn(\left\{\frac{a}{b}q_n\right\})$. It holds
\begin{equation}\label{aq/b}
\left\{\frac{a}{b}q_n\right\}=\theta q_n-\|\theta q_n\|=q_n\left(\theta-\frac{\|\theta q_n\|}{q_n}\right)=q_n\left(\theta-\frac{p_n}{q_n}\right)\,,
\end{equation}
where we have used the fact that $p_n$ is equal to $\|\theta q_n\|$, which immediately follows from $\lim_{n\to\infty}q_n^2\left|\theta -\frac{p_n}{q_n}\right|=0$.
Consequently, the equality $\sgn(\cotg(a k))=\sgn(\alpha)$ holds for $k=k_n$, as well as for $k$ in a certain neighbourhood of $k_n$. Furthermore, \eqref{aq/b} implies that
\begin{multline*}
q_n\lim_{k\to\frac{q_n\pi}{b}}F(k)
=q_n\left|\tg\left(\left\{\frac{a}{b}q_n\right\}\frac{\pi}{2}\right)\right|
<4q_n\left|\left\{\frac{a}{b}q_n\right\}\right|=4q_n^2\left|\theta-\frac{p_n}{q_n}\right|\to 0\
\end{multline*}
holds as $n\to\infty$. At the same time,
$$
q_n\lim_{k\to\frac{q_n\pi}{b}}\frac{|\alpha|}{k}=\frac{|\alpha|b}{\pi}>0 \qquad\text{for all $n\in\N$}\,.
$$
Comparing the two limits, we see that for any $n\in\N$ there exists a neighbourhood of $k_n$ on which it holds $F(k)<\frac{|\alpha|}{k}$. If we choose the \emph{right} neighbourhood for $\alpha>0$ and the \emph{left} neighbourhood for $\alpha<0$, the remaining condition $\sgn(\cotg bk)=\sgn(\alpha)$ will be satisfied there as well. To sum up, we have found infintely many points $k_n$ with certain neighbourhoods on which the gap condition~\eqref{GC1 b=c} is satisfied. In other words, the spectrum of the Hamiltonian has infinitely many gaps located at certain integer multiples of $\frac{\pi^2}{b^2}$. In the same way one can check the existence of neighbourhoods of a sequence of points $\frac{q_n\pi}{a}$ where the gap condition is satisfied.

(iii) When $\theta$ is badly approximable, there exists, by definition, a constant $\gamma>0$ such that $\left|\theta -\frac{p}{q}\right|>\frac{\gamma}{q^2}$ holds for all $p,q\in\N$.
This yields
\begin{eqnarray*}
\lefteqn{F\left(\frac{m\pi}{b}\right)
=\left|\tg\left(\left\{\frac{a}{b}m\right\}\frac{\pi}{2}\right)\right|
>\left|\left\{\frac{a}{b}m\right\}\right|\frac{\pi}{2}=\left|\theta m-\|\theta m\|\right|\frac{\pi}{2}} \\ && \hspace{5em}=m\left|\theta-\frac{\|\theta m\|}{m}\right|\frac{\pi}{2}>m\frac{\gamma}{m^2}\frac{\pi}{2}
=\frac{\gamma\pi}{2m}\,, \phantom{AAAAAA}
\end{eqnarray*}
and consequently, $F\left(\frac{m\pi}{b}\right)> |\alpha|\left( \frac{m\pi}{b}\right)^{-1}$ holds if $|\alpha|\leq\frac{\gamma\pi^2}{2b}$, i.e. the condition~\eqref{GC1 b=c} is violated in this case in all the local minima $\frac{m\pi}{b}$ of $F$.

It remains to show that the gap condition is violated in the local minima of $F$ at the points $\frac{m\pi}{a}$ as well. It is a well known fact that a number $\theta$ is badly approximable if and only if $1/\theta$ is badly approximable. Moreover, if $\gamma>0$ is the minimal constant such that $\left|\theta-\frac{p}{q}\right|>\frac{\gamma}{q^2}$ holds for all $p,q\in\N$, then $\gamma$ is at the same time the minimal constant such that $\left|\frac{1}{\theta}-\frac{q}{p}\right|>\frac{\gamma}{p^2}$ for all $q,p\in\N$. Hence we obtain, similarly as above,
$$
F\left(\frac{m\pi}{a}\right)
=2\left|\tg\left(\left\{\frac{b}{a}m\right\}\frac{\pi}{2}\right)\right|
>2\left|\left\{\frac{b}{a}m\right\}\right|\frac{\pi}{2}
>2m\frac{\gamma}{m^2}\frac{\pi}{2}=\frac{\gamma\pi}{m}\,.
$$
Thus $F\left(\frac{m\pi}{a}\right)>|\alpha|\left(\frac{m\pi}{a} \right)^{-1}$ holds if $|\alpha|\leq\frac{\gamma\pi^2}{a}$, i.e. the gap condition GC1 is violated at the local minima $\frac{m\pi}{a}$ of $F$ as well. To sum up, for any $\alpha$ such that $0<|\alpha| <\gamma\pi^2\min\{ \frac{1}{a},\frac{1}{2b}\}$ all the local minima of $F$ satisfy $F(k)>\frac{|\alpha|}{k}$, in other words, the condition GC1~\eqref{GC1 b=c} is violated everywhere for $k>0$.

On the other hand, by the Hurwitz extension of the Dirichlet theorem \cite[Chap.~II]{Sch91} for any irrational $\theta$ there are increasing integer sequences $\left\{p_n\right\}_{n=1}^\infty$ and $\left\{q_n \right\}_{n=1}^\infty$ such that $\left|\theta-\frac{p_n}{q_n} \right|<\frac{1}{\sqrt{5}q_n^2}$ holds for all $n\in\N$. In addition, one can find such sequences with the property $\theta-\frac{p_n}{q_n}>0$ or $\theta-\frac{p_n}{q_n}<0$ for all $n\in\N$, respectively. This allows us to assume that $\sgn\left(\theta-\frac{p_n}{q_n}\right)=\sgn(\alpha)$, and setting $k_n:=\frac{q_n\pi}{b}$, we obtain
$$
F\left(k_n\right)=\left|\tg\left(\left\{\frac{a}{b}q_n\right\}\frac{\pi}{2}\right)\right|\,.
$$
Since $\{x\}\in[-1/2,1/2]$ holds for any $x\in\R$ by definition, we infer that $\left|\left\{\frac{a}{b}m\right\}\frac{\pi}{2}\right| \leq\frac{\pi}{4}$. Furthermore, since $|\tg x|\leq\frac{4}{\pi}|x|$ holds for any $|x|\leq\frac{\pi}{4}$, we get
$$
F\left(k_n\right)<2\left|\left\{\frac{a}{b}q_n\right\}\right|=2q_n\left|\theta-\frac{\|\theta q_n\|}{q_n}\right|=4q_n\left|\theta-\frac{p_n}{q_n}\right|<2q_n\frac{1}{\sqrt{5}q_n^2}=\frac{2}{\sqrt{5}q_n}\,.
$$
At the same time, we have
$$
\frac{|\alpha|}{k_n}=\frac{|\alpha|b}{q_n\pi}\,,
$$
and consequently, $|\alpha|>\frac{2\pi}{\sqrt{5}b}$ implies existence of neighbourhoods of $\frac{q_n\pi}{b}$ on which the gap condition is satisfied. In a similar way one can prove that for $|\alpha|>\frac{4\pi}{\sqrt{5}a}$ there are neighbourhoods of $\frac{q_n\pi}{a}$ on which the condition GC1 is satisfied. To conclude, the spectrum of $H$ has infinitely many open gaps generated by the condition~\eqref{GC1 b=c} provided $|\alpha|>\frac{2\pi}{\sqrt{5}} \min\{\frac{2}{a}, \frac{1}{b}\}$.
\end{proof}

\subsection{Gap condition GC2 for $b=c$}

As we have indicated, the ``lower'' gap condition acquires now the form~\eqref{GC2 b=c}.

\begin{lemma}\label{Lemma GC2}
If $k>|\alpha|$ and the condition~\eqref{GC2 b=c} is satisfied, then necessarily $\frac{1}{|\sin ak|}>\frac{2}{|\sin bk|}$, $\:\alpha\cotg ak<0$ and $\cotg ak\cotg bk<0$.
\end{lemma}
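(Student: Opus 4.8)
The plan is to read off the three conclusions one at a time from the gap condition \eqref{GC2 b=c}, which I rewrite as $\frac{1}{|\sin ak|}-\frac{2}{|\sin bk|}>\left|\cotg ak+2\cotg bk+\frac{\alpha}{k}\right|$. The first assertion is immediate: the right-hand side is nonnegative, so the left-hand side must be strictly positive, giving $\frac{1}{|\sin ak|}>\frac{2}{|\sin bk|}$. This is exactly the inequality hypothesis of Corollary~\ref{Cor. trigon} with the choice $\ell_1=a$ and $\ell_2=\ell_3=b$, which sets up the next step.

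For the second assertion I would argue by contradiction. If $\alpha\cotg ak\geq0$ held, then Corollary~\ref{Cor. trigon} (whose remaining hypothesis we have just verified) would yield $\frac{1}{|\sin ak|}-\frac{2}{|\sin bk|}\leq\left|\cotg ak+2\cotg bk+\frac{\alpha}{k}\right|$, in direct conflict with \eqref{GC2 b=c}. Hence $\alpha\cotg ak<0$; in particular $\alpha\neq0$, $\cotg ak\neq0$, and $\sgn(\cotg ak)=-\sgn(\alpha)$.

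The third assertion, $\cotg ak\cotg bk<0$, is the crux and I would again use \emph{reductio ad absurdum}. Assume $\cotg ak\cotg bk\geq0$; since $\cotg ak\neq0$ has sign $-\sgn(\alpha)$, this forces $\cotg bk$ to share the sign of $\cotg ak$ or to vanish, so $\cotg ak+2\cotg bk$ carries the sign $-\sgn(\alpha)$ and has modulus $|\cotg ak|+2|\cotg bk|$, while $\frac{\alpha}{k}$ has the opposite sign. The reverse triangle inequality then gives
\[
\left|\cotg ak+2\cotg bk+\frac{\alpha}{k}\right|\geq|\cotg ak|+2|\cotg bk|-\frac{|\alpha|}{k}\,.
\]
To contradict \eqref{GC2 b=c} it suffices to show that the right-hand side here already exceeds $\frac{1}{|\sin ak|}-\frac{2}{|\sin bk|}$. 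The difference of the two equals $-\left(\frac{1}{|\sin ak|}-|\cotg ak|\right)+2|\cotg bk|+\frac{2}{|\sin bk|}-\frac{|\alpha|}{k}$, and here I would invoke three elementary bounds: $\frac{1}{|\sin ak|}-|\cotg ak|=\left|\tg\left(\left\{\frac{ak}{\pi}\right\}\frac{\pi}{2}\right)\right|\leq1$ (from the identity established in the proof of Corollary~\ref{gap podm} together with $\left\{\frac{ak}{\pi}\right\}\in[-1/2,1/2]$), $\frac{2}{|\sin bk|}\geq2$, and $\frac{|\alpha|}{k}<1$. The last of these is precisely where the standing hypothesis $k>|\alpha|$ enters. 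Combining them makes the difference at least $-1+0+2-\frac{|\alpha|}{k}=1-\frac{|\alpha|}{k}>0$, which produces the contradiction and yields $\cotg ak\cotg bk<0$.

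The only genuinely delicate point is this last step: the hypothesis $\cotg ak\cotg bk\geq0$ aligns $\cotg ak$ with $2\cotg bk$ so that their sum attains maximal modulus, and the telescoping of $\frac{2}{|\sin bk|}\geq2$ against $\frac{1}{|\sin ak|}-|\cotg ak|\leq1$ is what makes the estimate close — with the slack $1-\frac{|\alpha|}{k}$ controlled entirely by $k>|\alpha|$. The first two assertions are then essentially bookkeeping built on Corollary~\ref{Cor. trigon}.
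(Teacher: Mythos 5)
Your proof is correct and follows essentially the same route as the paper's: positivity of the right-hand side of \eqref{GC2 b=c} gives the first claim, Corollary~\ref{Cor. trigon} gives the second by contradiction, and the third is obtained by the same \emph{reductio ad absurdum} combining the reverse triangle inequality with the elementary bounds $\frac{1}{|\sin x|}-|\cotg x|\le 1$, $\frac{1}{|\sin x|}\ge 1$ and $\frac{|\alpha|}{k}<1$. No issues.
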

\begin{proof}
Inequality~\eqref{GC2 b=c} implies $\frac{1}{|\sin ak|}-\frac{2}{|\sin bk|}>0$, which gives the first claim. The second claim, $\alpha\cotg ak<0$, follows from Corollary~\ref{Cor. trigon}. It remains to show that $\cotg ak\cotg bk<0$. Note that $\frac{1}{|\sin ak|}>\frac{2}{|\sin bk|}$ implies $|\sin ak|<\frac{1}{2}$, hence $ak\in\left(m\pi-\frac{\pi}{6},m\pi+\frac{\pi}{6}\right)$ for an $m\in\N$, and therefore $|\cotg ak|>\sqrt{3}$.

We use again \emph{reductio ad absurdum} and suppose that $\cotg ak\cotg bk\geq0$. Then for any $k>|\alpha|$ we have, with regard to $|\cotg ak|>\sqrt{3}$,
$$
\left|\cotg ak+2\cotg bk+\frac{\alpha}{k}\right|\geq|\cotg ak|+2|\cotg bk|-\frac{|\alpha|}{k}\,,
$$
and since $-\frac{|\alpha|}{k}>-1$, we get
\begin{eqnarray*}
\lefteqn{\left|\cotg ak+2\cotg bk+\frac{\alpha}{k}\right|-\frac{1}{|\sin ak|}+\frac{2}{|\sin bk|}} \\ && >2\left(\frac{1}{|\sin bk|}+|\cotg bk|\right)-\left(\frac{1}{|\sin ak|}-|\cotg ak|\right)-1\,.
\end{eqnarray*}
It is easy to check that $\frac{1}{|\sin x|}+|\cotg x|\geq1$ and $\frac{1}{|\sin x|}-|\cotg x|\leq1$ for all $x\in\R$, hence
$$
\left|\cotg ak+2\cotg bk+\frac{\alpha}{k}\right|-\frac{1}{|\sin ak|}+\frac{2}{|\sin bk|}>0\,,
$$
which contradicts the inequality~\eqref{GC2 b=c}.
\end{proof}

\begin{corollary}\label{coro b=c}
For each $k>|\alpha|$ the gap condition~\eqref{GC2 b=c} is satisfied if and only if $\frac{1}{|\sin ak|}>\frac{2}{|\sin bk|}$, $\,\cotg ak\cotg bk<0$, $\,\alpha\cotg ak<0$, and $\left|G(k)-\frac{|\alpha|}{k}\right|<\frac{1}{|\sin ak|}-\frac{2}{|\sin bk|}$, where
\begin{equation}\label{G(k)}
G(k)=|\cotg ak|-2|\cotg bk|\,.
\end{equation}
\end{corollary}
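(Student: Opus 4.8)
The plan is to reduce the corollary to the necessity statement already contained in Lemma~\ref{Lemma GC2} together with a single sign-bookkeeping identity. Throughout I fix $k>|\alpha|$ and abbreviate the left-hand side of the gap condition by $\Phi(k):=\left|\cotg ak+2\cotg bk+\frac{\alpha}{k}\right|$, so that \eqref{GC2 b=c} reads $\frac{1}{|\sin ak|}-\frac{2}{|\sin bk|}>\Phi(k)$.

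First I would establish the key identity: under the two sign hypotheses $\cotg ak\cotg bk<0$ and $\alpha\cotg ak<0$ one has
$$
\Phi(k)=\left|G(k)-\frac{|\alpha|}{k}\right|\,,
$$
with $G(k)$ as in \eqref{G(k)}. Indeed, writing $s:=\sgn(\cotg ak)\in\{\pm1\}$, the hypothesis $\cotg ak\cotg bk<0$ forces $\cotg bk=-s|\cotg bk|$, while $\alpha\cotg ak<0$ forces $\frac{\alpha}{k}=-s\frac{|\alpha|}{k}$ (recall $k>0$). Hence all three summands share the common factor $s$ once signs are extracted, namely $\cotg ak+2\cotg bk+\frac{\alpha}{k}=s\bigl(|\cotg ak|-2|\cotg bk|-\frac{|\alpha|}{k}\bigr)=s\bigl(G(k)-\frac{|\alpha|}{k}\bigr)$, and taking absolute values gives the claimed identity since $|s|=1$. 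Note that this step is purely algebraic and does not invoke $k>|\alpha|$.

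For the forward implication, assume \eqref{GC2 b=c} holds. Lemma~\ref{Lemma GC2} yields at once the first three asserted conditions $\frac{1}{|\sin ak|}>\frac{2}{|\sin bk|}$, $\cotg ak\cotg bk<0$ and $\alpha\cotg ak<0$. The last two of these activate the identity above, so that \eqref{GC2 b=c} becomes $\frac{1}{|\sin ak|}-\frac{2}{|\sin bk|}>\left|G(k)-\frac{|\alpha|}{k}\right|$, which is precisely the fourth condition. Conversely, if all four listed conditions hold, the two sign conditions again give $\Phi(k)=\left|G(k)-\frac{|\alpha|}{k}\right|$, and substituting this into the fourth inequality reproduces \eqref{GC2 b=c} verbatim.

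I expect no genuine obstacle here: the analytic substance---that the sign pattern is forced and that $\frac{1}{|\sin ak|}>\frac{2}{|\sin bk|}$ is necessary---has already been absorbed into Lemma~\ref{Lemma GC2}, so that only the sign factorization remains. The sole point demanding care is the bookkeeping of signs, i.e.\ checking that the common factor $s$ can be pulled out of all three terms simultaneously; once this is verified the equivalence follows immediately in both directions, and the apparent redundancy of the first condition in the \emph{if} direction is harmless since the fourth inequality already forces $\frac{1}{|\sin ak|}-\frac{2}{|\sin bk|}>\left|G(k)-\frac{|\alpha|}{k}\right|\geq0$.
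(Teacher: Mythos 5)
Your proof is correct and follows essentially the same route as the paper: necessity of the three auxiliary conditions is delegated to Lemma~\ref{Lemma GC2}, and the equivalence of the remaining inequality with \eqref{GC2 b=c} is the sign-extraction identity $\cotg ak+2\cotg bk+\frac{\alpha}{k}=s\bigl(G(k)-\frac{|\alpha|}{k}\bigr)$ with $s=\sgn(\cotg ak)$, which the paper leaves implicit and you spell out. Your remark that the condition $\frac{1}{|\sin ak|}>\frac{2}{|\sin bk|}$ is recovered from the fourth inequality in the converse direction is a correct and welcome addition.
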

\begin{proof}
With regard to Lemma~\ref{Lemma GC2}, the gap condition~\eqref{GC2 b=c} for a fixed $k>|\alpha|$ requires $\cotg ak\cotg bk<0$, $\,\alpha\cotg ak<0$. Consequently, the gap condition for $k>|\alpha|$ is satisfied if and only if $\cotg ak\cotg bk<0$, $\,\alpha\cotg ak<0$, and
\begin{equation}\label{GC2 uprava}
\left||\cotg ak|-2|\cotg bk|-\frac{|\alpha|}{k}\right|<\frac{1}{|\sin ak|}-\frac{2}{|\sin bk|}\,,
\end{equation}
which concludes the argument.
\end{proof}

Before we pass to analysis of the gaps generated by the condition GC2, we prove a lemma that will be useful in dealing with
rational ratio $\frac{a}{b}$ and with $\frac{a}{b}$ being a badly approximable irrational number.

\begin{lemma}\label{Lemma yyy}
Let $\theta=\frac{a}{b}$.
\begin{itemize}
\item[(i)] If $\theta\in\Q$, then there exists a $c>0$ such that for all $k>0$ it holds
$$
\cotg ak\cotg bk<0\,\wedge\,|\sin bk|\geq2|\sin ak| \quad\Rightarrow\quad G(k)\geq c\,.
$$
\item[(ii)] If $\theta$ is a badly approximable irrational number, then there exists a $c>0$ such that for all $k>0$ it holds
$$
\cotg ak\cotg bk<0\,\wedge\,|\sin bk|\geq2|\sin ak| \quad\Rightarrow\quad G(k)>\frac{c}{k}\,.
$$
\end{itemize}
\end{lemma}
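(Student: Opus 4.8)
The plan is to translate everything into the signed distances of $ak$ and $bk$ to the nearest multiples of $\pi$ and then combine an arithmetic lower bound (from rationality, resp. bad approximability, of $\theta$) with a one–variable monotonicity estimate for $G$. Writing $ak=m\pi+\delta_a$ and $bk=n\pi+\delta_b$ with $m=\|ak/\pi\|$, $n=\|bk/\pi\|$ and $\delta_a,\delta_b\in(-\pi/2,\pi/2]$, I set $u:=|\delta_a|$ and $v:=|\delta_b|$, so that $|\sin ak|=\sin u$, $|\sin bk|=\sin v$, $|\cotg ak|=\cotg u$, $|\cotg bk|=\cotg v$, and hence $G(k)=\cotg u-2\cotg v$. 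The hypothesis $|\sin bk|\ge 2|\sin ak|$ becomes (b) $\sin v\ge 2\sin u$, which forces $\sin u\le\frac12$, i.e. $u\le\pi/6$ and $2\sin u\le1$; the hypothesis $\cotg ak\cotg bk<0$ becomes $\delta_a\delta_b<0$. Both $u,v$ are strictly positive, since the assumptions require $\sin ak,\sin bk\neq0$.

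First I would extract the arithmetic input. From $\theta=a/b=ak/bk$ one gets $\pi(\theta n-m)=\delta_a-\theta\delta_b$, and since $\delta_a,\delta_b$ have opposite signs while $\theta>0$ the right–hand side has absolute value $u+\theta v$, so $u+\theta v=\pi|\theta n-m|$. In case (i) $\theta=p/q$ in lowest terms gives $\theta n-m\in\frac1q\Z$; it cannot vanish, for $\theta n=m$ would force $\delta_a=\theta\delta_b$ with the same sign, contradicting $\delta_a\delta_b<0$; hence $|\theta n-m|\ge1/q$ and (a) $u+\theta v\ge\pi/q=:\epsilon$. In case (ii) bad approximability yields $|\theta n-m|=n|\theta-m/n|>\gamma/n$, and since $n\le bk/\pi+\frac12$ this gives $u+\theta v>\pi\gamma/n\ge\epsilon$ with $\epsilon$ of order $1/k$.

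The analytic core is then to show that, under (a) $u+\theta v\ge\epsilon$, (b) $\sin v\ge2\sin u$ and $u\le\pi/6$, one has $G=\cotg u-2\cotg v\ge\frac{3\epsilon}{\pi(1+2\theta)}$. Since $\partial_v G=2\csc^2 v>0$, for fixed $u$ the infimum of $G$ is attained at the smallest admissible $v$, namely $v_{\min}(u)=\max\{\arcsin(2\sin u),(\epsilon-u)/\theta\}$. In the regime where $(\epsilon-u)/\theta$ binds, both terms of $G$ decrease as $u$ grows, whereas in the regime where $\arcsin(2\sin u)$ binds one has $\sin v=2\sin u$ and, after rationalizing, $G=\frac{3\sin u}{\sqrt{1-\sin^2 u}+\sqrt{1-4\sin^2 u}}$, which is increasing in $u$. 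Consequently the infimum over the feasible region sits at the crossover point $u^*$ where both constraints are tight (if (a) alone would force $v>\pi/2$, that $u$ is infeasible and $v$ stays bounded away from $0$, which only makes $G$ larger). There $\sin v^*=2\sin u^*$, so $G\ge\frac32\sin u^*$, and the relation $u^*+\theta\arcsin(2\sin u^*)=\epsilon$ together with $u^*\le\frac\pi2\sin u^*$ (Jordan) and $\arcsin(2\sin u^*)\le\pi\sin u^*$ (convexity of $\arcsin$ on $[0,1]$) gives $\sin u^*\ge\frac{2\epsilon}{\pi(1+2\theta)}$, whence the claimed bound. Finally, in case (i) $\epsilon=\pi/q$ is a constant, so $G\ge c$; in case (ii) $\epsilon$ is comparable to $1/k$, so $G>c/k$ after absorbing the constants, the bounded range of small $k$ being harmless since the hypotheses become vacuous once $ak,bk$ are both near $0$.

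I expect the main obstacle to be the analytic minimization of the third paragraph: verifying the two monotonicity statements, confirming that the constrained infimum really is the doubly–tight point $u^*$ rather than a boundary effect, and then squeezing out the explicit lower bound $\sin u^*\ge\frac{2\epsilon}{\pi(1+2\theta)}$. By contrast the arithmetic step is routine once the sign condition $\delta_a\delta_b<0$ is used to exclude $\theta n=m$, and the passage to the distances $u,v$ is purely notational.
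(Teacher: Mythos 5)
Your proposal is correct and follows essentially the same route as the paper: both arguments pass to the distances of $ak$ and $bk$ from the nearest multiples of $\pi$, use the opposite-sign condition to force the integer combination $an-bm$ (your $\theta n-m$) to be nonzero, locate the constrained minimum of $G$ on the curve $|\sin bk|=2|\sin ak|$ where $G\geq\frac{3}{2}|\sin ak|$, and then convert to the arithmetic input via Jordan-type inequalities between $|\sin x|$ and $|x-m\pi|$. The only difference is presentational: the paper linearizes first and solves explicitly for the critical point $k'=\frac{6m+\pi n}{6a+\pi b}\pi$, whereas you package the same computation as a two-variable constrained minimization with a crossover point $u^*$.
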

\begin{proof}
Our aim is to estimate the function $G(k)=|\cotg ak|-2|\cotg bk|$ from below subject to the condition $\cotg ak\cotg bk<0\,\wedge\,|\sin bk|\geq2|\sin ak|$.

The function $G(k)$ attains local minima for $|\sin ak|=1$ and tends to $-\infty$ for $\sin bk=0$. Since both $|\sin ak|=1$ and $\sin bk=0$ contradict the condition $|\sin bk|\geq2|\sin ak|$, minimal values of $G(k)$ in the regions given by $\cotg ak\cotg bk<0\,\wedge\,|\sin bk|\geq2|\sin ak|$ are attained for $|\sin bk|=2|\sin ak|$.

The equality $|\sin bk|=2|\sin ak|$ gives
\begin{align*}
|\cotg ak|-2|\cotg bk|&=\frac{\sqrt{1-\sin^2 ak}}{|\sin ak|}-2\frac{\sqrt{1-\sin^2 bk}}{|\sin bk|} \\
&=\frac{\sqrt{1-\sin^2 ak}}{|\sin ak|}-\frac{\sqrt{1-4\sin^2 ak}}{|\sin ak|} \\
&=\frac{3|\sin ak|}{\sqrt{1-\sin^2 ak}+\sqrt{1-4\sin^2 ak}}\geq\frac{3|\sin ak|}{2}\,,
\end{align*}
hence $G(k)\geq\frac{3}{2}|\sin ak|$.

Let $m\in\N$ be chosen such that $m\pi$ is the integer multiple of $\pi$ closest to $ak$, i.e., $|ak-m\pi|\leq\frac{\pi}{2}$. In the same way we introduce $n\in\N$ such that $n\pi$ is the integer multiple of $\pi$ closest to $bk$. Obviously, the condition $\cotg ak\cdot\cotg bk<0$ implies $(ak-m\pi)\cdot(bk-n\pi)<0$.

It holds trivially $|\sin bk|\leq|bk-n\pi|$. The condition $|\sin bk|=2|\sin ak|$ implies $|\sin ak|\leq\frac{1}{2}$, hence $|ak-m\pi|\leq\frac{\pi}{6}$. Since $|x|\leq\frac{\pi}{6}\Rightarrow|\sin x|\geq\frac{3}{\pi}|x|$, we have $|\sin ak|=|\sin(ak-m\pi)|\geq\frac{3}{\pi}|ak-m\pi|$.

With regard to the estimates of $|\sin ak|$ and $|\sin bk|$ obtained above, it is easy to see that the quantity $|\sin ak|$ for $k$ solving the equation $|\sin bk|=2|\sin ak|$ is necessarily greater or equal to the quantity $\frac{3}{\pi}|ak-m\pi|$ for $k$ solving the equation $|bk-n\pi|=2\cdot\frac{3}{\pi}|ak-m\pi|$. Let us find such a $k$. The condition $(ak-m\pi)\cdot(bk-n\pi)<0$ together with $|bk-n\pi|=2\cdot\frac{3}{\pi}|ak-m\pi|$ gives the equation $bk-n\pi=-2\cdot\frac{3}{\pi}(ak-m\pi)$. Its solution reads $k'=\frac{6m+\pi n}{6a+\pi b}\pi$. Therefore, $\cotg ak\cotg bk<0\,\wedge\,|\sin bk|\geq2|\sin ak|$ implies
\begin{equation}\label{mi2}
G(k)\geq\frac{3}{2}\cdot\frac{3}{\pi}|ak'-m\pi|=\frac{9}{2\pi}\left|a\frac{6m+\pi n}{6a+\pi b}\pi-m\pi\right|=\frac{9\pi}{2}\cdot\frac{|an-bm|}{6a+\pi b}\,.
\end{equation}

\smallskip

\noindent (i) Let $\theta\in\Q$, i.e., $a=pL$, $b=qL$ for certain $p,q\in\N$ and $L>0$. Then the just obtained bound~\eqref{mi2} gives
$$
G(k)\geq\frac{9\pi}{2}\cdot\frac{\left|pLn-qLm\right|}{6pL+\pi qL}\,,
$$
where $L$ at the right-hand side can be obviously canceled. Note that the expression $pn-qm$ is necessarily nonzero: was it zero, then $|ak'-m\pi|$ would hold in view of \eqref{mi2}, contradicting thus the condition $(ak-m\pi)\cdot(bk-n\pi)<0$. Since $m,n,p,q\in\N$ by assumption, we have the trivial estimate $|np-mq|\geq1$. To sum up,
$$
\cotg ak\cotg bk<0\,\wedge\,|\sin bk|\geq2|\sin ak| \quad\Rightarrow\quad G(k)\geq\frac{9\pi}{2}\cdot\frac{1}{6p+\pi q}\,,
$$
which proves the first claim with $c=\frac{9\pi}{2(6p+\pi q)}$.

\smallskip

\noindent (ii) Let $\theta$ be badly approximable. Then $\theta':=1/\theta$ is badly approximable as well, i.e. there exists a $\gamma>0$ such that $\left|\theta'-\frac{n}{m}\right|>\frac{\gamma}{m^2}$ holds for all $n,m\in\N$. Using the estimate~\eqref{mi2} again, we obtain
\begin{align*}
G(k)&\geq\frac{9\pi}{2}\cdot\frac{|an-bm|}{6a+\pi b}=\frac{9\pi}{2} am\frac{\left|\frac{n}{m}-\frac{b}{a}\right|}{6a+\pi b}=\frac{9\pi}{2}\frac{am}{6a+\pi b}\left|\theta'-\frac{n}{m}\right| \\
&>\frac{9\pi}{2}\frac{am}{6a+\pi b}\cdot\frac{\gamma}{m^2}=\frac{9\pi}{2}\frac{\gamma}{6a+\pi b}\cdot\frac{a}{m}\,.
\end{align*}
We already know that $|ak-m\pi|\leq\frac{\pi}{6}$, hence $k\geq\frac{\pi}{a}\left(m-\frac{1}{6}\right)\geq\frac{\pi}{a}\cdot\frac{5m}{6}$. It means that $\frac{a}{m}\geq\frac{5\pi}{6}\cdot\frac{1}{k}$, which allows us to estimate $kG(k)$ as follows,
$$
kG(k)>\frac{9\pi}{2}\frac{\gamma}{6a+\pi b}\cdot\frac{5\pi}{6}\,.
$$
This yields the claim (ii) with $c=\frac{15\pi^2\gamma}{4(6a+\pi b)}$ concluding thus the proof.
\end{proof}

\begin{corollary}\label{coro: finitely}
Let $\theta=\frac{a}{b}$.
\begin{itemize}
\item[(i)] If $\theta\in\Q$, then the condition \eqref{GC2 b=c} generates at most finitely many gaps.
\item[(ii)] If $\theta$ is a badly approximable irrational, there exists a positive $\alpha_0$ such that the condition \eqref{GC2 b=c} generates no gaps for $0\le |\alpha|\le\alpha_0$.
\end{itemize}
\end{corollary}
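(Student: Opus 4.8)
The plan is to extract from Corollary~\ref{coro b=c} a single clean necessary condition for a GC2 gap and then to bound the quantity it involves from below using Lemma~\ref{Lemma yyy}. We may assume $\alpha\neq0$, since for $\alpha=0$ the condition GC2 produces no gaps at all. Put $R(k):=\frac{1}{|\sin ak|}-\frac{2}{|\sin bk|}$. By Corollary~\ref{coro b=c}, every $k>|\alpha|$ that contributes to a GC2 gap satisfies the premises $\cotg ak\,\cotg bk<0$, $|\sin bk|>2|\sin ak|$ and $\alpha\cotg ak<0$, together with $\bigl|G(k)-\frac{|\alpha|}{k}\bigr|<R(k)$. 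Retaining only the inequality $G(k)-R(k)<\frac{|\alpha|}{k}$ contained in the last relation, the problem reduces to bounding $G(k)-R(k)$ from below on the region cut out by the premises.

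For this I would invoke the identity $\frac{1}{|\sin x|}-|\cotg x|=\frac{1-|\cos x|}{|\sin x|}$ used already above, which gives $G(k)-R(k)=2\,\frac{1-|\cos bk|}{|\sin bk|}-\frac{1-|\cos ak|}{|\sin ak|}$. The first summand is increasing in $|\sin bk|$ and the second depends on $ak$ alone, so for fixed $ak$ the difference $G-R$ is increasing in $|\sin bk|$ on the admissible range $|\sin bk|\ge2|\sin ak|$ and is positive there. Exactly as for $G$ in the proof of Lemma~\ref{Lemma yyy}, this forces the infimum of $G-R$ over the premise region onto the boundary line $|\sin bk|=2|\sin ak|$; but on that line $R$ vanishes, so $G-R=G$. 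Hence $G-R$ inherits the very lower bounds that Lemma~\ref{Lemma yyy} proves for $G$.

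For part~(i), $\theta\in\Q$, Lemma~\ref{Lemma yyy}(i) then yields a constant $c>0$ with $G(k)-R(k)\ge c$ throughout the premise region, and the necessary condition forces $\frac{|\alpha|}{k}>c$, i.e. $k<|\alpha|/c$. Thus GC2 gaps are confined to the bounded interval $\bigl(0,\max\{|\alpha|,|\alpha|/c\}\bigr]$, and since the functions in \eqref{GC2 b=c} are real-analytic away from the finitely many Dirichlet points therein, its solution set is a finite union of intervals; this is precisely the claim that GC2 generates at most finitely many gaps. For part~(ii), $\theta$ badly approximable, the same reduction with Lemma~\ref{Lemma yyy}(ii) gives $G(k)-R(k)>c/k$, so the necessary condition becomes $|\alpha|>c$; choosing $\alpha_0:=c$, no $k>|\alpha|$ can satisfy \eqref{GC2 b=c} once $0\le|\alpha|\le\alpha_0$, while a direct $k\to0+$ expansion of \eqref{GC2 b=c} rules out the remaining range $k\le|\alpha|$ for such small $|\alpha|$, so no gaps survive.

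The main obstacle is the middle step: the gap condition controls $G-R$ and not $G$, so Lemma~\ref{Lemma yyy} is not applicable verbatim, and one must argue that $G-R$ and $G$ have the same minimiser. Proving the monotonicity of $G-R$ in $|\sin bk|$ is routine, but verifying that its infimum over the unbounded premise region is genuinely attained on the boundary $|\sin bk|=2|\sin ak|$ — where $R=0$ and the Diophantine input of Lemma~\ref{Lemma yyy} can be brought to bear — is the delicate point on which the argument rests.
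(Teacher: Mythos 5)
Your proposal is correct and follows essentially the same route as the paper: both reduce via Corollary~\ref{coro b=c} to the inequality $G(k)-R(k)\leq\frac{|\alpha|}{k}$, observe that the left-hand side attains its minimum over the premise region on the boundary $|\sin bk|=2|\sin ak|$, where $R$ vanishes and $G-R$ coincides with $G$, and then import the lower bounds of Lemma~\ref{Lemma yyy}. Your shortcut of extracting $G-R<\frac{|\alpha|}{k}$ directly from $|G-\frac{|\alpha|}{k}|<R$ (rather than first arguing $G>\frac{|\alpha|}{k}$ to drop the absolute value, as the paper does) is a harmless minor simplification.
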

\begin{proof}
According to Corollary \ref{coro b=c}, if $k$ is a solution of \eqref{GC2 b=c}, then $\frac{1}{|\sin ak|}\geq\frac{2}{|\sin bk|}$, $\,\cotg ak\cotg bk<0$, and $\left|G(k)-\frac{|\alpha|}{k}\right|\leq\frac{1}{|\sin ak|}-\frac{2}{|\sin bk|}$.

\smallskip

\noindent (i) Let $\theta\in\Q$. With regard to Lemma~\ref{Lemma yyy}, there exists a $c>0$ such that
$$
\frac{1}{|\sin ak|}>\frac{2}{|\sin bk|}\;\wedge\;\,\cotg ak\cotg bk<0 \quad\Rightarrow\quad G(k)\geq c
$$
holds for all $k>0$. Consequently, for $k\to\infty$ we have $G(k)>\frac{|\alpha|}{k}$. This allows us to remove the absolute value at the left-hand side of the condition $\left|G(k)-\frac{|\alpha|}{k}\right|\leq\frac{1}{|\sin ak|}-\frac{2}{|\sin bk|}$, which yields
\begin{equation}\label{rozepsana}
2\left(\frac{1}{|\sin bk|}-|\cotg bk|\right)-\left(\frac{1}{|\sin ak|}-|\cotg ak|\right)\leq\frac{|\alpha|}{k}\,.
\end{equation}
One can see, similarly as in the proof of Lemma~\ref{Lemma yyy}, that the left-hand side of \eqref{rozepsana} attains its local minima with respect to the condition $\frac{1}{|\sin ak|}\geq\frac{2}{|\sin bk|}\,\wedge\,\,\cotg ak\cotg bk<0$ at values $k$ satisfying $\frac{1}{|\sin ak|}=\frac{2}{|\sin bk|}$. This gives a necessary condition: Inequality \eqref{rozepsana} can be satisfied only if
$$
-2|\cotg bk|+|\cotg ak|\leq\frac{|\alpha|}{k}\,,
$$
i.e., for $G(k)\leq\frac{|\alpha|}{k}$. This is, however, impossible for $k\to\infty$, because $\frac{|\alpha|}{k}\to0$ and $G(k)\geq c>0$ due to the result of Lemma \ref{Lemma yyy}.

\smallskip

\noindent (ii) Let $\theta$ be badly approximable. In Lemma~\ref{Lemma yyy} we proved the existence of a $c>0$ such that for all $k>0$,
$$
\frac{1}{|\sin ak|}>\frac{2}{|\sin bk|}\;\wedge\;\,\cotg ak\cotg bk<0 \quad\Rightarrow\quad G(k)>\frac{c}{k}\,.
$$
In the rest of the proof we will demonstrate that one can set $\alpha_0:=c$.

Let us consider an $\alpha$ obeying $|\alpha|\leq\alpha_0:=c$. For such $\alpha$ we have $G(k)>\frac{c}{k}\geq\frac{|\alpha|}{k}$. Therefore, we can again remove the absolute value at the left-hand side of $\left|G(k)-\frac{|\alpha|}{k}\right|\leq\frac{1}{|\sin ak|}-\frac{2}{|\sin bk|}$, and obtain the condition \eqref{rozepsana}. Since the left-hand side attains its minimum with respect to the condition $\frac{1}{|\sin ak|}\geq\frac{2}{|\sin bk|}\,\wedge\,\,\cotg ak\cotg bk<0$ at $k$ satisfying $\frac{1}{|\sin ak|}=\frac{2}{|\sin bk|}$, it must hold $G(k)\leq\frac{|\alpha|}{k}$. However, for $|\alpha|<\alpha_0$ we have $G(k)>\frac{|\alpha|}{k}$ (see above), i.e., the last inequality cannot be fulfilled.
\end{proof}

\begin{theorem} \label{th: GC2}
Let $\theta=\frac{a}{b}$.
\begin{itemize}
\item[(i)] If $\theta$ is a Last admissible irrational number, then the condition \eqref{GC2 b=c} generates infinitely many gaps for any $\alpha\neq0$.
\item[(ii)] If $\theta$ is a badly approximable irrational, the condition \eqref{GC2 b=c} generates infinitely many gaps provided $|\alpha|\geq\frac{4\pi}{\sqrt{5}a}$.
\end{itemize}
\end{theorem}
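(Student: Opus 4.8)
The plan is to mirror the GC1 analysis of Theorem~\ref{Proposition GC1}, replacing its input by Corollary~\ref{coro b=c} together with the reduction already performed in the proof of Corollary~\ref{coro: finitely}. First I note that \eqref{GC2 b=c} forces $\frac{1}{|\sin ak|}>\frac{2}{|\sin bk|}$, so $|\sin ak|$ must be small, and the natural candidates are therefore the zeros of $\sin ak$, i.e. the points $\frac{m\pi}{a}$ with $m\in\N$. The companion family $\frac{m\pi}{b}$ is irrelevant here, since there $\frac{2}{|\sin bk|}\to\infty$ and $\frac{1}{|\sin ak|}>\frac{2}{|\sin bk|}$ cannot hold; this is precisely why GC2 involves only the length $a$ and hence the single threshold $\frac{4\pi}{\sqrt5 a}$, as opposed to the two families and the minimum occurring in GC1.

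Near $k=\frac{m\pi}{a}$ one has $G(k)\to+\infty$, so the absolute value in Corollary~\ref{coro b=c} may be dropped exactly as in the derivation of \eqref{rozepsana}, and the binding requirement reduces to
\[
2\Bigl|\tg\bigl(\{\tfrac{bk}{\pi}\}\tfrac{\pi}{2}\bigr)\Bigr|-\Bigl|\tg\bigl(\{\tfrac{ak}{\pi}\}\tfrac{\pi}{2}\bigr)\Bigr|<\frac{|\alpha|}{k}\,,
\]
and at $k=\frac{m\pi}{a}$ the $a$-term vanishes while $\{\tfrac{bk}{\pi}\}=\{\tfrac{m}{\theta}\}$, so it becomes
\[
2\Bigl|\tg\bigl(\{\tfrac{m}{\theta}\}\tfrac{\pi}{2}\bigr)\Bigr|<\frac{|\alpha|a}{m\pi}\,.
\]
Writing $\theta':=1/\theta=\frac ba$ and $p_m:=\|m\theta'\|$, the left-hand side is governed by $|\{m\theta'\}|=m|\theta'-\tfrac{p_m}{m}|$, the rational-approximation quantity for $\theta'$; since $\theta$ and $1/\theta$ are simultaneously Last admissible (resp. badly approximable), the hypothesis on $\theta$ carries over to $\theta'$.

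To turn this pointwise inequality into a genuine gap I would fix the signs and use continuity. Choosing the approximants with $\sgn(\theta'-\tfrac{p_n}{m_n})=\sgn(\alpha)$ makes $\sgn(\cotg bk)=\sgn(\alpha)$ at and near $\frac{m_n\pi}{a}$; approaching $\frac{m_n\pi}{a}$ from the side on which $\sgn(\cotg ak)=-\sgn(\alpha)$ then secures both sign conditions of Lemma~\ref{Lemma GC2}, while for a shift $\delta$ small relative to $|\{m_n\theta'\}|$ one retains $\frac{1}{|\sin ak|}>\frac{2}{|\sin bk|}$ and $k>|\alpha|$. Hence, once the strict inequality $2|\tg(\{\tfrac{m_n}{\theta}\}\tfrac{\pi}{2})|<\frac{|\alpha|a}{m_n\pi}$ is established, continuity of all the quantities involved yields a whole open interval satisfying \eqref{GC2 b=c}, i.e. a gap. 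For (i), Last admissibility gives $m_n^2|\theta'-\tfrac{p_n}{m_n}|\to0$, so the left side is $\approx\pi m_n|\theta'-\tfrac{p_n}{m_n}|=o(1/m_n)$ against a right side of order $1/m_n$, and the inequality holds for all large $n$ and every $\alpha\neq0$. For (ii), the Hurwitz theorem (used in the same way as in Theorem~\ref{Proposition GC1}(iii)) gives $|\theta'-\tfrac{p_n}{m_n}|<\frac{1}{\sqrt5 m_n^2}$ with a prescribed sign, and $|\tg y|\le\frac4\pi|y|$ for $|y|\le\frac\pi4$ bounds the left side by $\frac{4}{\sqrt5 m_n}$; comparison with $\frac{|\alpha|a}{m_n\pi}$ makes the inequality strict, even at the endpoint, exactly when $|\alpha|\ge\frac{4\pi}{\sqrt5 a}$. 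Either way infinitely many $m_n$ furnish infinitely many disjoint gaps.

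The delicate point, and the step I expect to require the most care, is the neighbourhood bookkeeping: near $\frac{m_n\pi}{a}$ both $\sin ak$ and $\sin bk$ are small, the latter because $\{m_n\theta'\}$ is tiny for a good approximant, so one must verify that $\frac{1}{|\sin ak|}$ genuinely dominates $\frac{2}{|\sin bk|}$ and that the two sign conditions of Lemma~\ref{Lemma GC2} persist on one common one-sided interval. This forces the admissible $\delta$-window to shrink together with $\{m_n\theta'\}$, but it stays nonempty, which is all the argument needs.
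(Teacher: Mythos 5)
Your proposal is correct and follows essentially the same route as the paper: localize at the points $\frac{m\pi}{a}$, use $G(k)\to+\infty$ there to drop the absolute value and reduce \eqref{GC2 b=c} to the condition $2\left|\tg\left(\left\{\frac{bk}{\pi}\right\}\frac{\pi}{2}\right)\right|-\left|\tg\left(\left\{\frac{ak}{\pi}\right\}\frac{\pi}{2}\right)\right|<\frac{|\alpha|}{k}$, apply the Last-admissibility or Hurwitz estimate to $\theta'=1/\theta$ (with the sign of the approximants chosen to match $\sgn\alpha$), and secure the remaining sign conditions by passing to a one-sided neighbourhood. The quantitative bookkeeping, including the bound $|\tg y|\leq\frac{4}{\pi}|y|$ and the resulting threshold $\frac{4\pi}{\sqrt{5}a}$, matches the paper's argument.
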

\begin{proof}
We have shown that condition \eqref{GC2 b=c} is equivalent to $\frac{1}{|\sin ak|}>\frac{2}{|\sin bk|}$, $\,\alpha\cotg bk<0$, $\,\alpha\cotg ak<0$, and $G(k)<\frac{|\alpha|}{k}$ for $G$ given by equation~\eqref{G(k)}, see Corollary~\ref{coro b=c}. In particular, in the proof of Lemma~\ref{Lemma GC2} we have demonstrated that the system of conditions can be satisfied only for $ak\in\left(m\pi-\frac{\pi}{6},m\pi+\frac{\pi}{6}\right)$, i.e. for $k$ in certain neighbourhoods of $\frac{m\pi}{a}$.

\smallskip

\noindent (i) If $\theta$ is a Last admissible number, the same is true for $\theta':=1/\theta$. We can proceed in a way similar to the proof of Theorem~\ref{Proposition GC1}. There are integer sequences $\left\{p_n\right\}_{n=1}^\infty$ and $\left\{q_n\right\}_{n=1}^\infty$ such that $\lim_{n\to\infty}q_n^2\left|\theta'-\frac{p_n}{q_n}\right|=0$ and $\sgn\left(\theta'-\frac{p_n}{q_n}\right)=\sgn(\alpha)$. Since $\lim_{k\to\frac{q_n\pi}{a}}G(k)=\infty$, it holds $G(k)-\frac{|\alpha|}{k}>0$ in sufficiently small neighbourhoods of $\frac{q_n\pi}{a}$. Therefore, in small neighbourhoods of $\frac{q_n\pi}{a}$ the condition \eqref{GC2 b=c} acquires the form
\begin{equation}\label{rozepsana 2}
2\left(\frac{1}{|\sin bk|}-|\cotg bk|\right)-\left(\frac{1}{|\sin ak|}-|\cotg ak|\right)\leq\frac{|\alpha|}{k}\,;
\end{equation}
let us denote $2\left(\frac{1}{|\sin bk|}-|\cotg bk|\right)-\left(\frac{1}{|\sin ak|}-|\cotg ak|\right)=:W(k)$ for the sake of brevity.
Then
$$
q_n\lim_{k\to\frac{q_n\pi}{a}}W(k)
=2q_n\left(\frac{1}{\left|\sin\frac{b}{a}q_n\pi\right|}-\left|\cotg\frac{b}{a}q_n\pi\right|\right)
=2q_n\left|\tg\left(\left\{\frac{b}{a}q_n\right\}\frac{\pi}{2}\right)\right|\,.
$$
Since $\left|\left\{\frac{b}{a}q_n\right\}\frac{\pi}{2}\right| \leq\frac{\pi}{4}$ according to the definition~\eqref{symbol} and $|x|\leq|\tg x|\leq\frac{4}{\pi}|x|$ holds for $|x|\leq\frac{\pi}{4}$, we get
$$
q_n\lim_{k\to\frac{q_n\pi}{a}}W(k)
<4q_n\left|\left\{\frac{b}{a}q_n\right\}\right|
=4q_n^2\left|\theta'-\frac{\|\theta' q_n\|}{q_n}\right|
=4q_n^2\left|\theta'-\frac{p_n}{q_n}\right|\to 0
$$
as $n\to\infty$. At the same time we have
$$
q_n\lim_{k\to\frac{q_n\pi}{a}}\frac{|\alpha|}{k}=\frac{|\alpha|a}{\pi}=const.>0\,,
$$
and therefore inequality \eqref{rozepsana 2} is satisfied on a certain neigbourhood of $\frac{q_n\pi}{a}$ for every $n\in\N$. Let us check the remaining conditions from Corollary~\ref{coro b=c}. We will show that $\alpha\cotg bk>0$ and $\alpha\cotg ak<0$. The equation $\alpha\cotg bk>0$ is satisfied due to the choice of the sequence $\left\{q_n \right\}_{n=1}^\infty$. The equation $\alpha\cotg ak<0$ can be satisfied by choosing a \emph{left} (if $\alpha>0$) or \emph{right} (if $\alpha<0$) neigbhourhood of $\frac{q_n\pi}{a}$. To sum up, there are infinitely many integers $q_n\in\N$ such that the gap condition~\eqref{GC2 b=c} is satisfied for $k$ belonging to certain right or left neighbourhood of $\frac{q_n\pi}{a}$ for every $n\in\N$.

\smallskip

\noindent (ii) Let $\theta$ be an irrational number and $\theta'=1/\theta$. We shall demonstrate that if $|\alpha|\geq \frac{\pi^2}{\sqrt{5}a}$, then there are infinitely many $q\in\N$ such that $k$ in certain neighbourhoods of $\frac{q\pi}{a}$ satisfy the inequalities $\frac{1}{|\sin ak|}>\frac{2}{|\sin bk|}$ and $\left|G(k)-\frac{|\alpha|}{k}\right|<\frac{1}{|\sin ak|}-\frac{2}{|\sin bk|}$ together with the conditions $\alpha\cotg bk>0$ and $\alpha\cotg ak<0$. The first inequality is obviously valid for all $k$ sufficiently close to $\frac{q\pi}{a}$ with any $q\in\N$.
As for the second one, note that for $k$ sufficiently close to $\frac{q\pi}{a}$ it holds $G(k)>\frac{|\alpha|}{k}$, therefore, we shall prove that
$$
\lim_{k\to\frac{q\pi}{a}}W(k)<\lim_{k\to\frac{q\pi}{a}}\frac{|\alpha|}{k}
$$
for $W(k)$ introduced in part (i) above.
We have
$$
\lim_{k\to\frac{q\pi}{a}}W(k)
=2\left(\frac{1}{\left|\sin\frac{b}{a}q\pi\right|}
-\left|\cotg\frac{b}{a}q\pi\right|\right)
=2\left|\tg\left(\left\{\frac{b}{a}q\right\}\frac{\pi}{2}\right)\right|\,.
$$
Similarly as in part (i), we estimate the right-hand side of the last equation from above by $4\left|\left\{\frac{b}{a}q\right\}\right|$. For any irrational $\theta$ there are infinitely many $p,q\in\N$ such that $\left|\theta -\frac{p}{q}\right|<\frac{1}{\sqrt{5}q^2}$; in particular, for infinitely many $q\in\N$ it holds
$$
\left|\left\{\frac{b}{a}q\right\}\right|=\left|\theta q-\|\theta q\|\right|=q\left|\theta-\frac{\|\theta q\|}{q}\right|<q\frac{1}{\sqrt{5}q^2}=\frac{1}{\sqrt{5}q}\,.
$$
Consequently, for such $q$ we have
$$
\lim_{k\to\frac{q\pi}{a}}W(k)<\frac{4}{\sqrt{5}q}\,.
$$
On the other hand, $\lim_{k\to\frac{q\pi}{a}}\frac{|\alpha|}{k}=\frac{|\alpha|a}{q\pi}$, and therefore
$$
\lim_{k\to\frac{q\pi}{a}}W(k)<\lim_{k\to\frac{q\pi}{a}}\frac{|\alpha|}{k}\,,
$$
holds provided $|\alpha|\geq\frac{4\pi}{\sqrt{5}a}$. In other words, there are infinitely many $q\in\N$ such that the inequality $G(k) <\frac{|\alpha|}{k}$ is valid in a certain neighbourhood of $\frac{q\pi}{a}$.

Let us proceed to the condition $\alpha\cotg bk>0$. There are infinitely many $q\in\N$ such that $\left\{\frac{b}{a}q\right\}>0$ and infinitely many $q\in\N$ such that $\left\{\frac{b}{a}q\right\}<0$. Since $\sgn\left(\cotg b\frac{q\pi}{a}\right)=\sgn\left\{\frac{b}{a}q\right\}$, we conclude that inequality \eqref{rozepsana 2} and $\alpha\cotg bk>0$ can be satisfied simultaneously in certain ``Dirichlet point'' neighbourhoods for infinitely many $q\in\N$. Finally, the last condition $\alpha\cotg ak<0$ is obviously fulfilled in a sufficiently small \emph{left} (if $\alpha>0$) or \emph{right} (if $\alpha<0$) neighbourhood of $\frac{q\pi}{a}$ for any $q\in\N$.
\end{proof}

\section{Summary and open questions}

We have analyzed the spectrum of the quantum graph Hamiltonian describing a stretched hexagonal lattice with a $\delta$-coupling in the vertices, with a particular attention to the case when the stretch is parallel to one of the edges. In contrast to the case of a rectangular lattice \cite{Ex96, EG96} we have two different conditions determining the spectral gaps. They have nevertheless common features with respect to the number-theoretic properties of the lattice geometry, in particular, the existence of a critical coupling strengths needed to open spectral gaps in case of badly approximable edge lengths ratios.

Our results leave various questions open. An obvious one concerns the general case where we know that there are infinitely many open gaps for commensurate edges and $\alpha\ne 0$; once the commensurability hypothesis is abandoned we expect number-theoretic effect similar to those we have seen in the particular situation discussed in Sec.~\ref{s: b=c}. In addition to that, however, one wonders whether there are other number theoretic spectral features for $|\alpha|$ beyond the critical values similar to those observed in \cite{EG96}. In connection with Corollary~\ref{coro: finitely}(i) it would be also interesting to know whether one can have situations with a finite number of open gaps analogous to a Bethe-Sommerfeld-type spectrum of usual periodic Schr\"odinger operators in dimension two.


\subsection*{Acknowledgment}
The work was supported by the Czech Science Foundation under the project 14-06818S.



\begin{thebibliography}{1}

\bibitem{BK}
G. Berkolaiko, P. Kuchment, \textit{Introduction to Quantum Graphs}. Amer. Math. Soc., Providence, R.I., 2013.
\bibitem{Ex96}
P.~Exner, \textit{Contact interactions on graph superlattices}, J. Phys. A: Math. Gen. {\bf 29} (1996), 87--102.
\bibitem{EG96}
P.~Exner, R.~Gawlista, \textit{Band spectra of rectangular graph superlattices}. Phys. Rev. {\bf B53} (1996), 7275--7286.
\bibitem{Ku05}
P.~Kuchment, \textit{Quantum graphs: II. Some spectral properties of quantum and combinatorial graphs}. \emph{J. Phys. A.: Math. Gen.} \textbf{38} (2005), 4887--4900.
\bibitem{KP07}
P.~Kuchment, O.~Post, \textit{On the spectra of carbon nano-structures}. Comm. Math. Phys. \textbf{275} (2007), no. 3, 805--826.
\bibitem{La94}
Y.~Last, \textit{Zero measure spectrum for almost Mathieu operator}. Commun. Math. Phys. \textbf{164} (1994), 421--432.
\bibitem{Sch91}
W.M.~Schmidt, \emph{Diophantine Approximations and Diophantine Equations},  Lecture Notes in Mathematics, vol.~1467, Springer, Berlin 1991.

\end{thebibliography}
\end{document}